\documentclass[11pt,letterpaper]{article}
\usepackage{fullpage}
\usepackage[round]{natbib}
\usepackage{amsmath,amsfonts,amssymb,amsthm, caption, comment, mathrsfs,mathtools}
\usepackage[margin=1in]{geometry}
\usepackage[most]{tcolorbox}
\usepackage{xcolor}
\usepackage[hyperindex,breaklinks]{hyperref}
\hypersetup{
  colorlinks,
  citecolor=violet,
  linkcolor=blue,
  urlcolor=blue}
\usepackage{dsfont}
\numberwithin{equation}{section}
\counterwithin{equation}{subsection}
\usepackage{booktabs} 
\usepackage[ruled,noend]{algorithm2e} 

\SetAlFnt{\small}
\SetAlCapFnt{\small}
\SetAlCapNameFnt{\small}
\SetAlCapHSkip{0pt}
\IncMargin{-\parindent}
\usepackage[colorinlistoftodos]{todonotes}
\numberwithin{equation}{section}
\usepackage{packages/macros-common}
\usepackage{packages/macros-project-specific}
\usepackage{packages/mytheorems}
\usepackage{lipsum}

\usepackage[suppress]{color-edits}
\addauthor{ss}{magenta} 
\addauthor{gp}{blue}
\addauthor{ih}{red}
\addauthor{mh}{magenta}

\newcommand\extrafootertext[1]{%
    \bgroup
    \renewcommand\thefootnote{\fnsymbol{footnote}}%
    \renewcommand\thempfootnote{\fnsymbol{mpfootnote}}%
    \footnotetext[0]{#1}%
    \egroup
}


\title{Gains-from-Trade in Bilateral Trade with a Broker}

  \newcommand{\country}[1]{#1.}
  \newcommand{\city}[1]{#1}
  \newcommand{\institution}[1]{#1}
  \newcommand{\email}[1]{Email: \texttt{#1}}
  \newcommand{\affiliation}{\thanks}

\author{
 Ilya Hajiaghayi\affiliation{
   \institution{Takoma Park Middle School}
   \city{Takoma Park}
   \country{USA}
 \email{ihajiaghayi@gmail.com}
 }
 \and
 {MohammadTaghi Hajiaghayi
 \affiliation{
   \institution{University of Maryland}
   \city{College Park}
   \country{USA}
 \email{hajiagha@umd.edu}
 }}
 \and
 {Gary Peng
 \affiliation{
   \institution{University of Maryland}
   \city{College Park}
   \country{USA}
 \email{gpeng1@terpmail.umd.edu }
 }}
 \and
 {Suho Shin
 \affiliation{
   \institution{University of Maryland}
   \city{College Park}
   \country{USA}
 \email{suhoshin@umd.edu}
 }}
}

\begin{document}
\date{}
\maketitle
\allowdisplaybreaks
\begin{abstract}
    We study bilateral trade with a broker, where a buyer and seller interact exclusively through the broker. The broker strategically maximizes her payoff through arbitrage by trading with the buyer and seller at different prices.
    We study whether the presence of the broker interferes with the mechanism's gains-from-trade (GFT) achieving a constant-factor approximation to the first-best gains-from-trade (FB), in a similar vein to the constant-factor approximability without a broker by Deng, Mao, Sivan and Wang (STOC'21).

    We first identify a structural connection between GFT, FB, and the broker's expected profit when the broker uses a posted-pricing mechanism, which implies the $1/2$-approximation via median-based pricing by McAfee (AERB'08) as a special case.
    More importantly, inspired by this, we show that the GFT achieves a $1 / 36$-approximation to the FB even if the broker runs an optimal posted-pricing mechanism under symmetric agents with monotone-hazard-rate distributions, which are central to mechanism design due to Myerson (MOR'81).
    We cement this result by proving that the approximation factor can also be lower bounded by $2/M^2$ if the hazard rates are bounded above by $M$.
    Furthermore, if the broker uses a quantile-based posted-pricing mechanism of offering the $\alpha$-quantile (of the buyer's distribution) to the buyer and the $\beta$-quantile (of the seller's distribution) to the seller, \eg due to a lack of a precise estimation of the distributions, the mechanism achieves a $\beta(1-\alpha)$-approximation to the first-best GFT. \gpedit{As a corollary, we prove that there exists a simple single-sample mechanism of the broker that achieves a $1 / 12$-approximation to the first-best GFT.}
    
    Beyond posted-pricing mechanisms, even if the broker uses an arbitrary incentive-compatible (IC) and individually-rational (IR) mechanism that maximizes her expected profit, we prove that it induces a $1 / 2$-approximation to the first-best GFT when the buyer's and seller's distributions are uniform distributions with arbitrary supports. 
    This bound is shown to be tight.

    We complement such results by proving that if the broker uses an arbitrary profit-maximizing IC and IR mechanism, there exists a family of problem instances under which the approximation factor to the first-best GFT becomes arbitrarily close to zero.
    We show that this phenomenon persists even if we restrict one of the buyer's or seller's distributions to have a singleton support, or even in the symmetric setting where the buyer and seller have identical distributions. \gpedit{Finally, we show that in the general-distribution and public-seller settings, the inapproximability of the first-best gains-from-trade carries over to first-best social welfare as well.}
    
    \extrafootertext{Correspondence to Suho Shin. A part of this work has appeared in SODA'25. This work is partially supported by DARPA QuICC, NSF AF:Small \#2218678, NSF AF:Small \#2114269, Army-Research Laboratory (ARL) \#W911NF2410052, and MURI on Algorithms, Learning and Game Theory.}

\end{abstract}

\newpage
\section{Introduction}
The problem of bilateral trade, introduced by~\cite{myerson1983efficient}, has been a cornerstone in mechanism design and algorithmic game theory in the past few decades.
In bilateral trade, a single buyer and seller engage in trade, with each party holding private information about their valuation.
If the buyer's valuation is larger than the seller's, social welfare strictly increases by transferring the item from the seller to buyer at certain prices.
The buyer and seller, however, may strategically hide their true valuation to increase their own payoff.
This leads to the challenge of constructing a truthful mechanism while maintaining its efficiency in terms of the social welfare.
More precisely, the objective here is to design a mechanism that optimizes efficiency while satisfying (i) incentive-compatibility (IC), \ie both players truthfully report their values, (ii) individual rationality, \ie the players are not worse off by participating the trade, and (iii) budget-balancedness (BB) so that the mechanism does not run a deficit.

The seminal result by~\cite{myerson1983efficient}, however, reveals that it is impossible to achieve ex-post efficiency\footnote{A mechanism is ex-post efficient if it achieves the optimal ex-post social welfare, \ie if trade occurs whenever the buyer's valuation is at least the seller's valuation.} subject to the constraints of IC, IR, and BB, and even subject to the loosened notion of Bayes-Nash IC (BNIC), IR, and weakly budget-balancedness (WBB).
Correspondingly, a long line of work~\citep{mcafee2008gains,blumrosen2014reallocation,blumrosen2016almost,blumrosen2016approximating} has studied the best possible approximate efficiency, in particular for the notion of \emph{gains-from-trade} (GFT),\footnote{Gains-from-trade is defined by the expected marginal increase of the social welfare. Thus, it is typically harder to approximate than social welfare. We refer to Section~\ref{sec:model} for more details.} with respect to those desideratum.
It was a long-standing open problem whether a constant-factor approximation to the first-best GFT was possible, until it was recently answered in the affirmative by~\cite{deng2022approximately}.

On the other hand, in many real-world scenarios, the buyer and seller might not be able to trade the item directly, \eg due to regulatory requirements, privacy concerns, or time constraints, in particular for financial markets, real estate, over-the-counter markets, and markets for demerit goods with licensed intermediaries.
Brokers do more than just facilitate transactions, where they often play a crucial role in price discovery and risk management.
Notably, \cite{myerson1983efficient} also study bilateral trade with a broker and characterizes an optimal mechanism for a broker who tries to maximize her own profit.
Despite being significantly practical and well-founded, there have only been a few papers studying bilateral trade in the presence of broker~\citep{vcopivc2008robust,zhang2019efficient,eilat2021bilateral,kuang2023profit}.

In this work, we initiate the study of approximating gains-from-trade in bilateral trade with a broker, investigating the following fundamental question, which has not been explored before:
\definecolor{mycolor}{rgb}{0.9, 0.90, 0.9}
\begin{tcolorbox}[colback=mycolor,colframe=gray!75!black,colframe=white]
\begin{center}
        \emph{Does the presence of a strategic broker interfere with the possibility of constant-factor approximation to the first-best gains-from-trade?}
\end{center}
\end{tcolorbox}
We answer this question both affirmatively and negatively by characterizing the regimes under which constant approximation is possible or not.
Our characterization reveals the extent to which the marginal increase of the social welfare can be interrupted by a strategic broker, shedding light on the public decision-making process of enforcing regulatory restriction in trade. 


\subsection{Our Results}\label{sec:summary-results}
To summarize our results, we first briefly introduce the problem setup.
In bilateral trade with a broker, there exist a buyer and a seller with private valuations $v$ and $c,$ respectively. We use $F$ and $G$ to denote the respective cumulative probability distributions (CDFs) of the buyer and the seller, and $f$ and $g$ to denote their respective probability density functions (PDFs). In addition, we write $\mu_F(\alpha)$ for $\alpha \in [0,1]$ to denote the $\alpha$-quantile of $F,$ \ie $\mu_F(\alpha) = \inf\{t: F(t) \ge \alpha\}$, and similarly define $\mu_G(\beta)$  for $\beta \in [0,1]$ to be the $\beta$-quantile of $G$. 
The buyer and the seller exclusively trade through a broker, who tries to maximize her expected profit by constituting an arbitrage on the prices to the buyer and seller. We consider \emph{Bayesian mechanisms} in which the broker knows $F$ and $G$.\footnote{In comparison, the literature often considers \emph{prior-independent} mechanisms that do not know the distributions, \eg second-price auctions.} The gains-from-trade in this setting is defined to be the expected marginal increase of the social welfare, \ie by $\GFT = \Exu{v \sim F, c \sim G}{(v-c) \cdot x(v,c)},$ where $x(v,c)$ denotes the probability that the item is transferred from the seller to the buyer if their respective valuations are $c$ and $v$.
If there exists an omniscient planner who can observe the private valuations and also enforce trade at arbitrary prices, society would enjoy the maximum increase of social welfare, \ie the first-best $\GFT$ (henceforth $\FB$), defined as $\FB = \Exu{v \sim F, c \sim G}{(v-c) \cdot \Ind{v \ge c}}$. We refer to Section~\ref{sec:model} for more detailed definitions and notation.

\paragraph{Distribution Types}
We say that the buyer's distribution is \emph{regular} if his virtual valuation $\phi_F(x) = x - (1-F(x))/f(x)$ is nondecreasing, and we say that the seller's distribution is regular if his virtual valuation $\phi_G(x) = x + G(x)/g(x)$ is nondecreasing.
The quantities $h_b(x) = f(x)/(1-F(x))$ and $h_s(x) =g(x)/G(x)$ are usually denoted as \emph{hazard rates}.
We say that $F$ has a \emph{monotone hazard rate} (MHR) if $h_b(x)$ is nondecreasing, and we say that $G$ is MHR if $h_s(x)$ is nonincreasing.
MHR and regular distributions are widely studied in the literature, with their fruitful connections to mechanism design (\cite{myerson1981optimal,blumrosen2016almost,blumrosen2016approximating}). Finally, we often consider the ex-ante \emph{symmetric} agents setting, in which the buyer's and seller's distributions are the same, \ie $F = G$.\footnote{Such ex-ante symmetric agents are well-motivated by size-discovery mechanisms in financial markets -- in particular for dark pools~\citep{kang2019fixed} -- and are widely assumed in the literature, \eg see~\cite{degryse2009dynamic,zhu2014dark,duffie2017size}.}

\paragraph{Posted-Pricing Mechanisms}
We begin by investigating a broker who runs a \emph{posted-pricing} mechanism,\footnote{This is often called a fixed-price mechanism in the literature.} \ie a broker who offers a pair of fixed prices $p$ and $q$ to the buyer and seller, respectively. Posted-pricing mechanisms are widely adopted in many real world applications due to their simplicity and transparency, in particular for online labor market and size-discovery mechanisms, including workup mechanisms and dark pools (\cite{kang2019fixed}).
In addition, in the standard setting of bilateral trade without a broker, posted-pricing mechanisms were shown by~\cite{hagerty1987robust} to be the only class of mechanisms satisfying budget-balancedness and DSIC.
Note further that posted-pricing mechanisms are clearly IR.

Towards obtaining approximation factors to FB for various regimes of problem instances, we first identify a structural inequality involving $\GFT$, the broker's expected profit, and $\FB$, which might be of independent interest.
To this end, we use a variant of the analysis by~\cite{mcafee2008gains} after decomposing the gains-from-trade using the posted prices.
\begin{theorem}[Decomposition]\label{thm:pp-structure}
    Given a broker's posted-pricing mechanism with prices $(p,q)$, the following holds:
    \begin{align}
        \GFT \ge \Pro + \min(G(q), 1-F(p))\parans{\FB - \int_{q}^{p}G(x)(1-F(x)) \, dx}.\label{eq:07051845}
    \end{align}
\end{theorem}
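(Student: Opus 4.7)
The plan is to decompose both $\GFT$ and $\FB$ into pieces that can be compared directly. Throughout I would assume without loss of generality that $p \ge q$, since a broker with $p < q$ would lose money on every trade.

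First, I would rewrite the gains-from-trade as an integral over the trade region, $\GFT = \int_0^q \int_p^\infty (v-c) f(v) g(c) \, dv \, dc$. Applying the algebraic identity $(v-c) = (v-p) + (p-q) + (q-c)$ inside the integrand and exploiting that the $v$- and $c$-integrals decouple, $\GFT$ splits as the sum of a buyer-surplus term $G(q) \int_p^\infty (v-p) f(v) \, dv$, the broker's profit $(p-q)(1-F(p))G(q) = \Pro$, and a seller-surplus term $(1-F(p)) \int_0^q (q-c) g(c) \, dc$. Invoking the layer-cake identities $\int_p^\infty (v-p) f(v)\, dv = \int_p^\infty (1-F(x))\, dx$ and $\int_0^q (q-c) g(c)\, dc = \int_0^q G(x)\, dx$ yields
$$\GFT - \Pro \;=\; G(q) \int_p^\infty (1-F(x))\, dx + (1-F(p)) \int_0^q G(x)\, dx.$$

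Next, I would derive the standard identity $\FB = \int_0^\infty G(x)(1-F(x))\, dx$ by writing $(v-c)\Ind{v \ge c} = \int_0^\infty \Ind{c \le x \le v}\, dx$ and applying Fubini. Splitting this integral at $q$ and at $p$, the quantity $\FB - \int_q^p G(x)(1-F(x))\, dx$ equals $\int_0^q G(x)(1-F(x))\, dx + \int_p^\infty G(x)(1-F(x))\, dx$. Setting $m := \min(G(q), 1-F(p))$, the target inequality then reduces to a term-by-term comparison against the two pieces of $\GFT - \Pro$ above.

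For this final step, I would bound each integral separately. On $[0,q]$, using $m \le 1-F(p)$ and $1 - F(x) \le 1$, one gets $m \int_0^q G(x)(1-F(x))\, dx \le (1-F(p)) \int_0^q G(x)\, dx$. On $[p,\infty)$, using $m \le G(q)$ and $G(x) \le 1$, one gets $m \int_p^\infty G(x)(1-F(x))\, dx \le G(q) \int_p^\infty (1-F(x))\, dx$. Summing the two inequalities finishes the proof. The only real subtlety — more of a bookkeeping point than an obstacle — is that the two integrals in the decomposition of $\GFT - \Pro$ are asymmetric, each carrying only one of the factors $G(q)$ or $1-F(p)$, which is precisely why the minimum of the two (and not a product or average) must appear on the right-hand side of the claimed bound.
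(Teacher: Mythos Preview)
Your proposal is correct and follows essentially the same approach as the paper: the same three-term decomposition of $\GFT$ via $(v-c)=(v-p)+(p-q)+(q-c)$, the same tail-integral identities for the buyer and seller surplus terms, and the same representation $\FB=\int_0^\infty G(x)(1-F(x))\,dx$. The only cosmetic differences are that you derive the tail identities and the $\FB$ formula via a layer-cake/Fubini argument rather than integration by parts, and you do the final comparison term-by-term (pairing the $[0,q]$ piece with the $(1-F(p))$ coefficient and the $[p,\infty)$ piece with the $G(q)$ coefficient) instead of first factoring out $\min(G(q),1-F(p))$ and then inserting the factor $\le 1$ into each integral; both orderings use the same two inequalities.
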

As a byproduct, we reobtain the structural result by~\cite{mcafee2008gains}, including the $1 / 2$-approximation to the first-best GFT via median-based pricing. We state this formally in the following corollary:
\begin{corollary}\label{cor:pp-half}
    Consider bilateral trade without a broker\footnote{Alternatively, a broker who tries to maximize the gains-from-trade instead of her expected profit.} and a posted-pricing mechanism with fixed price $p$.
    Then, we have
    \begin{align*}
        \GFT \ge \min(G(p),1-F(p)) \cdot \FB.
    \end{align*}
    As a special case, if the median of $F$ is at least that of $G$, then setting $p$ to be any value between these medians yield a $1 / 2$-approximation to the first-best GFT.
\end{corollary}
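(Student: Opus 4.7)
The plan is to apply Theorem~\ref{thm:pp-structure} directly and specialize it to the degenerate case in which the two posted prices coincide, $q = p$. In the setting without a broker (or equivalently, a broker who is maximizing gains-from-trade rather than her own profit), there is no reason to drive a wedge between the price charged to the buyer and the price paid to the seller: the arbitrage spread $p - q$ can be set to zero. Under this identification the broker's expected profit $\Pro$ vanishes, because profit is generated only when a trade occurs at a positive spread $p - q > 0$, and the integral term $\int_{q}^{p} G(x)(1-F(x))\,dx$ collapses to zero as well since its limits coincide.

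Substituting $q = p$ into the decomposition inequality (\ref{eq:07051845}) of Theorem~\ref{thm:pp-structure} then gives
\begin{align*}
    \GFT \;\ge\; 0 \;+\; \min(G(p),\,1-F(p))\cdot\bigl(\FB - 0\bigr) \;=\; \min(G(p),\,1-F(p))\cdot\FB,
\end{align*}
which is exactly the first assertion of the corollary.

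For the median-based special case, I would pick $p$ inside the interval $[\mu_G(1/2),\,\mu_F(1/2)]$, which is nonempty by the assumption that the median of $F$ is at least that of $G$. For any such $p$, monotonicity of the CDFs yields $F(p) \le F(\mu_F(1/2)) = 1/2$ and $G(p) \ge G(\mu_G(1/2)) = 1/2$, so that $1 - F(p) \ge 1/2$ and $G(p) \ge 1/2$ simultaneously, and hence $\min(G(p), 1-F(p)) \ge 1/2$. Plugging this into the bound from the first part immediately delivers $\GFT \ge \FB / 2$.

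There is essentially no combinatorial or analytic obstacle here; the whole statement is a clean specialization of Theorem~\ref{thm:pp-structure}. The only point that warrants a sentence of justification is why it is legitimate (and optimal, from the viewpoint of GFT) to collapse the two prices to a single one in the no-broker regime, which is the standard observation that a nonzero spread $p > q$ only shrinks the trade region $\{c \le q,\, v \ge p\}$ without adding any welfare, while the arbitrage profit is welfare-neutral in the two-sided market without an intermediary.
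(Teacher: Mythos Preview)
Your proposal is correct and essentially identical to the paper's own argument: the paper also derives the corollary by plugging $p = q$ into Theorem~\ref{thm:pp-structure}, which makes both $\Pro$ and the integral $\int_q^p G(x)(1-F(x))\,dx$ vanish, and then specializes to the median case.
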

Note that Corollary~\ref{cor:pp-half} immediately follows from plugging $p = q$ into Theorem~\ref{thm:pp-structure}, which results in $\Pro = 0$ and $\displaystyle \int_{q}^p G(x)(1-F(x)) \, dx = 0$, 


On the other hand, it is not immediately clear what approximation factors we can obtain from Theorem~\ref{thm:pp-structure} when a profit-maximizing broker is in fact present.
Let $(p^*, q^*)$ be the broker's optimal pair of prices.
As an initial attempt, suppose that the broker uses the $1/4$-quantile of the seller's distribution and the $3/4$-quantile of the buyer's distribution as prices for the seller and buyer, respectively, assuming that $\mu_F(3/4) \ge \mu_G(1/4)$.
For simplicity, let $p = \mu_F(3/4)$, $q = \mu_G(1/4)$, $\Delta = p-q$ and $\displaystyle H(p,q) = \int_q^p G(x)(1-F(x)) \, dx$.
Then, to lower bound the right-hand side of Theorem~\ref{thm:pp-structure}, we might naively upper bound the integral by $\Delta \cdot \max(G(p), 1-F(q))$.
However, this involves the gap $\Delta$ between the posted prices as well as the probabilities $G(p)$ and $1 - F(q)$, none of which are easy to deal with.


\ssedit{
Instead, taking a closer look at~\eqref{eq:07051845}, we consider two cases where $\displaystyle H(p,q) = \int_q^p G(x)(1-F(x))$ is sufficiently small or moderately large, respectively.
If $H(p,q)$ maintains a rather small value, then since
\begin{align*}
    \Pro = (p-q)(1-F(p))G(q) \le \int_q^p (1-F(x))G(x) \, dx,
\end{align*}
it should be the case that $\Pro$ is also small.
Thus, there is a high chance that $\min(G(q),1-F(p))$ is small as well, \ie we cannot find an appropriate lower bound of the right-hand side of~\eqref{eq:07051845} in multiplicative terms of $\FB$.

On the other hand, suppose $H(p,q)$ is moderately large, say $H(p,q) \ge C \cdot \FB$ for some constant $C > 0$.
Then, since the profit-maximizing broker's optimal payoff should be at least the payoff obtained from posting $(p,q)$, we know that
\begin{align*}
    \Pro(p^*, q^*) \ge \Pro(p,q) \ge \Delta \cdot \frac{1}{16},
\end{align*}
where we obtain the $1 / 16$ due to our choice of $(p,q)$ above.
At the same time, $\Delta$ provides a trivial upper bound on $H(p,q),$ which implies that we can lower bound the optimal broker's profit by $C' \cdot \FB$ for some constant $C' > 0$.
The main technical challenge here is to prove that there exists an appropriate choice of a pair of prices $(p, q)$ such that the corresponding quantile probabilities are absolute constants and $H(p,q)$ is at least some constant times $\FB$.
We show that this is indeed possible for MHR distributions with symmetric agents, using a clever decomposition and an exploitation of the properties of MHR distributions.
Formally, we prove the following theorem:
\begin{theorem}[Symmetric agents, MHR distributions]\label{thm:pp-optimal}
    Consider symmetric agents with MHR distributions. Then, if the broker runs an optimal posted-pricing mechanism,\footnote{Note that the result here only holds for an optimal posted-pricing mechanism, not an arbitrary posted-pricing mechanism. Indeed, if the broker is adversarial rather than profit-maximizing, one cannot obtain any approximation factor since she may give up her profit but try to ruin society by making no trade at all.} it follows that $\GFT \ge \nicefrac{1}{36} \cdot \FB$.\footnote{Note that Theorem~\ref{thm:pp-optimal} immediately implies that under symmetric agents with MHR distributions, any optimal posted-pricing mechanism achieves a $\nicefrac{1}{36}$-approximation to the first-best social welfare.} \sscomment{A footnote for the Social Welfare.}
\end{theorem}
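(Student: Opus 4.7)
My plan is to exhibit a specific quantile-based pair of posted prices $(\hat p,\hat q)$ and exploit the fact that a profit-maximizing broker earns at least the profit that this pair would yield. Using $F=G$, I set $\hat q=\mu_F(1/4)$ and $\hat p=\mu_F(3/4)$, giving $G(\hat q)=1-F(\hat p)=1/4$ and $\Pro(\hat p,\hat q)=(\hat p-\hat q)\cdot(1/4)^2=(\hat p-\hat q)/16$. Since every trade executed under $(\hat p,\hat q)$ contributes at least $\hat p-\hat q$ to social welfare (trade forces $v\ge\hat p$ and $c\le\hat q$), we have $\GFT\ge\Pro^*\ge\Pro(\hat p,\hat q)=(\hat p-\hat q)/16$. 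The theorem therefore reduces to showing $\hat p-\hat q\ge c\cdot\FB$ for an absolute constant $c\ge 16/36=4/9$.

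To establish such a lower bound, I would use the symmetric-case identity $\FB=\int_{-\infty}^{\infty}F(x)(1-F(x))\,dx$ and decompose:
\begin{align*}
\FB=\int_{-\infty}^{\hat q}F(1-F)\,dx+\int_{\hat q}^{\hat p}F(1-F)\,dx+\int_{\hat p}^{\infty}F(1-F)\,dx.
\end{align*}
The middle term is at most $(\hat p-\hat q)/4$ from $F(1-F)\le 1/4$. The upper-tail term is controlled by buyer-MHR: log-concavity of $1-F$ gives $1-F(x)\le (1/4)e^{-h_b(\hat p)(x-\hat p)}$ for $x\ge\hat p$, which integrates to $(1/4)/h_b(\hat p)$; combining with $h_b(\hat p)(\hat p-\hat q)\ge\int_{\hat q}^{\hat p}h_b(t)\,dt=\ln 3$ (since $(1-F(\hat q))/(1-F(\hat p))=3$) yields an upper-tail bound of $(\hat p-\hat q)/(4\ln 3)$. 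A symmetric argument using seller-MHR, which is equivalent to log-concavity of $F$ (i.e., $f/F$ nonincreasing), bounds the lower-tail integral by $(\hat p-\hat q)/(4\ln 3)$.

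Summing the three bounds yields $\FB\le(\hat p-\hat q)(1/4+1/(2\ln 3))$, so $\hat p-\hat q\ge\FB/(1/4+1/(2\ln 3))$; numerically this is roughly $1.4\,\FB$ and in particular at least $(4/9)\FB$. The resulting bound $\GFT\ge(\hat p-\hat q)/16\ge\FB/(16\cdot(1/4+1/(2\ln 3)))\approx\FB/11.3$ comfortably implies the announced $\nicefrac{1}{36}$-approximation.

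The main obstacle I anticipate is the lower-tail bound: the buyer-MHR assumption alone only controls the upper tail of $1-F$, so one must invoke the seller-MHR condition (which coincides in the symmetric case) to get an exponential bound on $F$ as $x$ decreases past $\hat q$; this is exactly where the symmetric-agents hypothesis is used. Standard approximation by smooth MHR densities handles atoms and unbounded or one-sided supports. Any looseness in the numerical estimates above leaves ample room for the clean $\nicefrac{1}{36}$ bound claimed; a finer optimization of the quantile parameter or a split on whether the tail or middle contribution to $\FB$ dominates would likely give an even cleaner constant, but the plan above already suffices.
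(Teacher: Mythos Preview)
Your proposal is correct and follows the same high-level strategy as the paper: lower-bound $\GFT(p^*,q^*)$ by $\Pro(p^*,q^*)$, lower-bound the latter by the profit at a fixed quantile pair $(\hat p,\hat q)=(\mu_F(\alpha),\mu_F(\beta))$, and then show that the inter-quantile gap $\hat p-\hat q$ is at least a constant times $\FB=\int F(1-F)$, using the MHR hypotheses to control the two tails of that integral.

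The difference lies in how the tail contributions of $\int F(1-F)$ are bounded. The paper keeps the integrand $F(1-F)$ intact and uses the monotonicity of $(1-F)/f$ and $F/f$ to compare the middle integral $\int_{\hat q}^{\hat p}F(1-F)$ against each tail via a change of variable $dx\mapsto dF(x)$; this yields algebraic conditions in $\alpha,\beta,C$ which they solve with $(\alpha,\beta,C)=(2/3,1/3,4)$ to obtain $1/36$. You instead bound the tails by dropping a factor (using $F(1-F)\le 1-F$ on the upper tail, $F(1-F)\le F$ on the lower) and then invoke log-concavity of $1-F$ and of $F$ directly to obtain exponential decay past $\hat p$ and $\hat q$; together with the identities $\int_{\hat q}^{\hat p}h_b=\int_{\hat q}^{\hat p}h_s=\ln 3$ this gives $\FB\le(\hat p-\hat q)\bigl(\tfrac14+\tfrac{1}{2\ln 3}\bigr)$. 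Your route is a bit more direct, avoids the auxiliary inequality $\hat p-\hat q\ge\int_{\hat q}^{\hat p}F(1-F)$, and, even with the slightly lossier quantile choice $(\alpha,\beta)=(3/4,1/4)$ versus the paper's $(2/3,1/3)$, already delivers roughly a $1/11$ approximation, comfortably beating the stated $1/36$. The two arguments use the same place for the symmetric-MHR hypothesis (one tail needs buyer-MHR, the other seller-MHR), so your identification of the ``main obstacle'' matches the paper's use of symmetry.
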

}

It is worth noting that symmetric agents are well-motivated by size-discovery mechanisms including workup mechanisms by the U.S. Treasury market and dark pools~\citep{duffie2017size,zhu2014dark,kang2019fixed}.
In particular, in stock markets with brokerage, each participant can act either as a buyer or a seller. Thus, the broker cannot distinguish between the buyer and seller and hence cannot estimate their respective distributions separately. MHR distributions are also central to the theory of mechanism design along with regular distributions, due to the celebrated result by~\cite{myerson1981optimal} and further seminal works in algorithmic game theory~\citep{hartline2009simple,chawla2010multi,chawla2007algorithmic,hartline2008optimal} (see~\cite{roughgarden2010algorithmic} for a comprehensive exposition).

To tackle the relatively small approximation guarantee presented above, we show that if the hazard rates of the MHR distributions are further bounded above by $M > 0$, then any optimal posted-pricing mechanism achieves at least a $2/M^2$-approximation.
The proof follows from expanding $\FB$ and $\GFT$ in a slightly different manner than in Theorem~\ref{thm:pp-structure}, largely by exploiting symmetry and properties of MHR distributions.
The formal statement can be presented as follows:
\begin{theorem}[Symmetric agents, MHR distributions with bounded hazard rates]\label{thm:pp-mhr-bounded}
    Consider symmetric agents with MHR distributions whose hazard rates are uniformly bounded above by $M$.
    Then, if the broker runs an optimal posted pricing mechanism, it follows that $\GFT \ge 2/M^2 \cdot \FB$.\footnote{Note that $M$ is guaranteed to be at least $2$ in such problem instances, implying that the best possible approximation factor here is $1 / 2$.}
\end{theorem}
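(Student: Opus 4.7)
The plan is to derive $\GFT \ge (2/M^2)\FB$ by expanding both quantities through the quantile transformation $y = F(x)$ and exploiting the first-order conditions for the broker's profit maximization together with the bounded MHR structure. Using symmetry $F = G$, I would write $\FB = \int_{-\infty}^{\infty} F(x)(1-F(x))\,dx = \int_0^1 \frac{y(1-y)}{f(F^{-1}(y))}\,dy$ and expand the broker's gains-from-trade as $\GFT(p^*, q^*) = \Pro(p^*, q^*) + (1-F(p^*))\int_{-\infty}^{q^*} F(x)\,dx + F(q^*)\int_{p^*}^{\infty} (1-F(x))\,dx$, where each of the two tail integrals admits an analogous quantile form.

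The first step is to invoke the broker's first-order conditions, which in the symmetric setting read $h_b(p^*) = h_s(q^*) = 1/(p^*-q^*)$. Combined with $h_b, h_s \le M$, these yield $p^* - q^* \ge 1/M$ and pin down the density values $f(p^*) = (1-F(p^*))/(p^*-q^*)$ and $f(q^*) = F(q^*)/(p^*-q^*)$. Next, using the MHR monotonicity of $h_b$ together with the uniform bound $h_b \le M$, I would derive the exponential bound $1-F(x) \ge (1-F(p^*))e^{-M(x-p^*)}$ for $x \ge p^*$, and symmetrically $F(x) \ge F(q^*) e^{-M(q^*-x)}$ for $x \le q^*$ from $h_s \le M$. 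Integrating gives $\int_{-\infty}^{q^*} F(x)\,dx \ge F(q^*)/M$ and $\int_{p^*}^{\infty} (1-F(x))\,dx \ge (1-F(p^*))/M$, yielding $\GFT(p^*, q^*) \ge 2 F(q^*)(1-F(p^*))/M + \Pro(p^*, q^*)$.

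For the upper bound on $\FB$, I would split the quantile integral at $\alpha = F(q^*)$ and $1-\beta = F(p^*)$ into three pieces and bound each using MHR. In the outer pieces, the FOC-pinned density values combined with MHR monotonicity imply that $f(F^{-1}(y))$ stays above a matching rate, yielding upper bounds of the form $\int_0^{\alpha} y(1-y)/f \le \alpha(p^*-q^*)$ and symmetrically over $[1-\beta, 1]$. In the middle piece $[\alpha, 1-\beta]$, both hazard-rate constraints together with the uniform bound $M$ give matching bounds. The hardest step will be controlling this middle-region contribution, since only the MHR monotonicities (combined with the FOC-pinned values of $f$ at the endpoints) restrict $f$ from below there; achieving the precise $2/M^2$ factor --- rather than a weaker $1/M^3$-type dependence --- requires leveraging the full symmetric MHR structure tightly and avoiding loose constants in each of the three regions.
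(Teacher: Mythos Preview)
There is a genuine gap in your plan, and the paper's route is quite different and much simpler. The paper does not use first-order conditions, quantile splitting, or MHR monotonicity at all. It uses a one-line averaging argument: since $(p^*,q^*)$ is optimal, $\Pro \ge (p-q)(1-F(p))F(q)$ holds for \emph{every} pair with $p\ge q$. Integrating this pointwise inequality over the triangle $\{0\le q\le p\le 1\}$ (of area $1/2$) and applying only the bounds $\tfrac{1-F(p)}{f(p)},\tfrac{F(q)}{f(q)}\ge\tfrac{1}{M}$ gives
\[
\frac{\Pro}{2}\;\ge\;\frac{1}{M^2}\int_0^1\!\int_q^1(p-q)\,dF(p)\,dF(q)\;=\;\frac{1}{M^2}\,\FB,
\]
so $\Pro\ge(2/M^2)\FB$, and $\GFT\ge\Pro$ finishes. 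Only the uniform hazard-rate bound is used; the monotone-hazard-rate property is never invoked.

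Your outline breaks down at exactly the point you flag as hardest. Write $\Delta=p^*-q^*$, $\alpha=F(q^*)$, $\beta=1-F(p^*)$. On $[q^*,p^*]$, combining MHR monotonicity with the FOC values $h_b(p^*)=h_s(q^*)=1/\Delta$ gives $h_b(x)\le 1/\Delta$ and $h_s(x)\le 1/\Delta$; these are \emph{upper} bounds on $f$, hence \emph{lower} bounds on the quantile integrand $y(1-y)/f(F^{-1}(y))$ --- the wrong direction for upper-bounding $\FB$. The lower bounds on $f$ that \emph{are} available in the middle (via $h_b(x)\ge h_b(q^*)$ and $h_s(x)\ge h_s(p^*)$) carry prefactors $\alpha/(\Delta(1-\alpha))$ and $\beta/(\Delta(1-\beta))$, so any upper bound on $\int_{q^*}^{p^*}F(1-F)\,dx$ obtained this way scales like $\Delta/\alpha$ or $\Delta/\beta$. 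Since your lower bound on $\GFT$ is $\alpha\beta(\Delta+2/M)$, the whole comparison collapses unless $\alpha\beta$ is bounded away from zero. But the FOC/MHR structure only yields $\alpha+\beta\ge 1/2$ (from $f\le\min(F,1-F)/\Delta$ on $[q^*,p^*]$), which does not control the product. So the sharp $2/M^2$ is not reachable along this route without an additional idea --- and that idea is precisely the paper's averaging trick, which compares $\Pro$ against \emph{all} posted-price pairs simultaneously rather than working only at the optimizer.
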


So far, we have only considered optimal posted-pricing mechanisms, which requires the broker to have complete information, \ie a precise estimation, about the distributions $F$ and $G$.
In practice, however, it is typically difficult for the broker to exactly estimate the probability distributions of the buyer and the seller.
Instead, the broker may only be aware of several statistics about the distributions such as expectations, medians, or more generally quantiles.
Therefore, our next question is to characterize the approximation factor if the broker runs a \emph{quantile-based} posted-pricing mechanism where she offers $\mu_F(\alpha)$ to the buyer and $\mu_G(\beta)$ to the seller.
In this case, we observe that one can express the approximation factor to the first-best GFT as a function of $\alpha$ and $\beta$ without introducing the gap between the quantiles itself, formalized as follows:



\begin{theorem}[Quantile-based posted-pricing mechanism]\label{thm:pp-quantile}
    Suppose $\mu_G(\beta) \le \mu_F(\alpha),$ with $\alpha < 1$ and $\beta > 0.$ Then, if the broker sets $p = \mu_F(\alpha)$ and $q = \mu_G(\beta),$ we have that
    \begin{align*}
        \GFT \ge \beta(1-\alpha)\FB.
    \end{align*}
\end{theorem}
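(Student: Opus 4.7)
The strategy is to bypass Theorem~\ref{thm:pp-structure} and instead work directly from a closed-form expression for $\GFT$ under the posted prices $(p,q) = (\mu_F(\alpha), \mu_G(\beta))$, which can then be compared term by term with the integral representation of $\FB$. Because $q \le p,$ trade occurs exactly when $c \le q$ and $v \ge p,$ so
\begin{align*}
    \GFT = \int_0^q \int_p^\infty (v-c)\, f(v)\, g(c)\, dv\, dc.
\end{align*}

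Splitting $v-c = (v-p) + (p-c)$ and applying integration by parts to both $\int_p^\infty (v-p) f(v)\, dv = \int_p^\infty (1-F(v))\, dv$ and $\int_0^q c\, g(c)\, dc = qG(q) - \int_0^q G(c)\, dc$, the expression above becomes
\begin{align*}
    \GFT = G(q)\int_p^\infty (1-F(v))\, dv + G(q)(1-F(p))(p-q) + (1-F(p))\int_0^q G(c)\, dc.
\end{align*}
For continuous $F$ and $G$ we have $G(q) = \beta$ and $1-F(p) = 1-\alpha$ (the general case follows by a standard approximation), so this simplifies to
\begin{align*}
    \GFT = \beta \int_p^\infty (1-F(v))\, dv + \beta(1-\alpha)(p-q) + (1-\alpha)\int_0^q G(c)\, dc.
\end{align*}

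Next, I would invoke the identity $\FB = \int_0^\infty G(x)(1-F(x))\, dx,$ which follows in one line from writing $(v-c)_+ = \int_0^\infty \mathbf{1}[c \le t \le v]\, dt$ and applying Fubini, and split its domain into $[0,q],$ $[q,p],$ and $[p,\infty)$. Each of the three summands in the closed form for $\GFT$ will then be matched against $\beta(1-\alpha)$ times the corresponding piece of $\FB$ using only the elementary bounds $G \le 1,$ $1-F \le 1,$ and $G(x)(1-F(x)) \le 1.$ Concretely: on $[p,\infty),$ from $(1-\alpha)G(v) \le 1$ we obtain $\beta \int_p^\infty (1-F(v))\, dv \ge \beta(1-\alpha)\int_p^\infty G(v)(1-F(v))\, dv$; on $[q,p],$ from $G(x)(1-F(x)) \le 1$ we obtain $\beta(1-\alpha)(p-q) \ge \beta(1-\alpha)\int_q^p G(x)(1-F(x))\, dx$; and on $[0,q],$ from $\beta(1-F(c)) \le 1$ we obtain $(1-\alpha)\int_0^q G(c)\, dc \ge \beta(1-\alpha)\int_0^q G(c)(1-F(c))\, dc$. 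Summing the three inequalities yields $\GFT \ge \beta(1-\alpha)\FB$.

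The main obstacle is producing the clean three-term decomposition of $\GFT$ above; once it is in hand, the three comparisons to $\FB$ are immediate. One could alternatively attempt to plug into Theorem~\ref{thm:pp-structure}, but its coefficient $\min(G(q),1-F(p)) = \min(\beta, 1-\alpha)$ is too weak to recover the product $\beta(1-\alpha),$ so this direct expansion appears to be the cleaner route.
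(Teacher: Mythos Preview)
Your argument is correct. The three-term identity you derive for $\GFT$ is exactly the equality that appears midway through the paper's proof of Theorem~\ref{thm:pp-structure} (before the lossy $\min(G(q),1-F(p))$ step), and your termwise comparison with the three pieces of $\FB = \int_0^\infty G(x)(1-F(x))\,dx$ is clean and valid.

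The paper, however, proves Theorem~\ref{thm:pp-quantile} by a genuinely different route: rather than passing through the integration-by-parts decomposition and the integral form of $\FB$, it argues directly on the double integral $\FB = \int_0^\infty\int_c^\infty (v-c)\,dF(v)\,dG(c)$ via a mass-comparison. For fixed $c\le q$, the region $v\in[c,\mu_F(\alpha)]$ has $F$-mass at most $\alpha$ and pointwise smaller integrand than the region $v\ge\mu_F(\alpha)$ which has $F$-mass $1-\alpha$, so the inner integral over $v\ge c$ is at most $1/(1-\alpha)$ times the inner integral over $v\ge p$; the same trick on the $c$-side supplies the $\beta$. Your approach has the virtue of recycling the Theorem~\ref{thm:pp-structure} decomposition and showing precisely why the $\min$ there is the only slack: keeping the coefficients $G(q)$ and $1-F(p)$ separate and matching them to the intervals $[p,\infty)$ and $[0,q]$ recovers the product $\beta(1-\alpha)$ rather than the minimum. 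The paper's argument is shorter and needs no integration by parts, but yours makes the connection to Theorem~\ref{thm:pp-structure} more transparent.
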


Alternatively speaking, since we know that $\Pro = (\mu_F(\alpha) - \mu_G(\beta))(1-\alpha)\beta$, this is equivalent to $\GFT \ge \Pro / \Delta \cdot \FB,$ where $\Delta$ denotes the gap between the prices. As an example of Theorem~\ref{thm:pp-quantile}, if the broker sets the $3/4$-quantile (\ie the third quartile) of the buyer's distribution as the price for the buyer and the $1/4$-quantile (\ie the first quartile) of the seller's distribution as the price for the seller, then the gains-from-trade is at least $1/16$ of the first-best GFT.

It is also worth noting that Theorem~\ref{thm:pp-quantile} immediately implies a weaker version of the result by~\cite{mcafee2008gains}.
Indeed, assuming that $\mu_G(1/2) \le \mu_F(1/2)$, plugging in $\alpha = \beta = 1/2$ yields a $1 / 4$-approximation to the $\FB$, which is strictly weaker than Corollary~\ref{cor:pp-half}.
Conceptually, therefore, Theorem~\ref{thm:pp-quantile} completely characterizes the approximation factor using the probabilities that correspond to the quantiles while sacrificing a reasonable amount of loss in the approximation factor.

\ssedit{
An interesting implication of this result is that under symmetric agents, a posted-pricing mechanism that draws a single sample from each of the (identical) distributions of the buyer and the seller and offers the larger sample as the price to the buyer and smaller sample as the price to the seller achieves a $\nicefrac{1}{12}$-approximation to $\FB.$ In other words, if the broker has limited information, then society can enjoy a constant fraction of the first-best gains-from-trade, even without any distributional assumptions such as MHR.
\begin{corollary}\label{cor:single-sample}
    Suppose $F = G.$ Then, given $p \sim F$ and $q \sim G,$ the posted-pricing mechanism $(\max(p, q), \min(p, q))$ obtains a $\nicefrac{1}{12}$-approximation to the first-best gains-from-trade in expectation.
\end{corollary}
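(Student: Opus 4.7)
}
The plan is to apply Theorem~\ref{thm:pp-quantile} conditionally on the two i.i.d.\ samples and then take expectation. Since $F = G$ and we may assume $F$ is continuous (the general case follows by a standard tie-breaking/limit argument), let $U_1 = F(p)$ and $U_2 = G(q) = F(q)$, so that $U_1, U_2$ are i.i.d.\ uniform on $[0,1]$ and $p = \mu_F(U_1)$, $q = \mu_G(U_2)$. Set
\[
\alpha = \max(U_1, U_2), \qquad \beta = \min(U_1, U_2),
\]
so that the mechanism $(\max(p,q), \min(p,q))$ posts precisely the prices $\mu_F(\alpha)$ to the buyer and $\mu_G(\beta)$ to the seller. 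Because $F = G$ is nondecreasing and $\beta \le \alpha$, the hypothesis $\mu_G(\beta) \le \mu_F(\alpha)$ of Theorem~\ref{thm:pp-quantile} holds automatically.

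Conditioning on the realized samples and applying Theorem~\ref{thm:pp-quantile}, the gains-from-trade of the resulting posted-pricing mechanism is at least $\beta(1-\alpha)\FB$. Taking expectation over $(U_1, U_2)$, the overall approximation factor is
\[
\Exu{U_1, U_2}{\min(U_1,U_2)\bigl(1-\max(U_1,U_2)\bigr)}.
\]
Using the joint density $2$ of the order statistics $(\beta, \alpha)$ on $\{(u,v) : 0 \le u \le v \le 1\}$, this expectation reduces to the one-dimensional integral $\int_0^1 u(1-u)^2 \, du$, which I would evaluate directly.

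The main (and essentially only) obstacle is the routine verification that this integral equals $1/12$; beyond that, the argument is a clean plug-in of Theorem~\ref{thm:pp-quantile} combined with the probability-integral-transform trick enabled by $F = G$. No structural or distributional assumption beyond $F = G$ is needed, which is exactly the conceptual point of the corollary.
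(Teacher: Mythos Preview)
Your proposal is correct and essentially identical to the paper's proof: both use the probability integral transform to replace the samples by i.i.d.\ uniforms, invoke Theorem~\ref{thm:pp-quantile} conditionally, and then integrate $\beta(1-\alpha)$ over the unit square (the paper splits into the two symmetric cases $\alpha>\beta$ and $\alpha<\beta$, whereas you pass directly to the order statistics, but the computation is the same). Your reduction to $\int_0^1 u(1-u)^2\,du = 1/12$ is correct, and your remark about continuity/tie-breaking is a reasonable hedge that the paper leaves implicit.
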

}

\paragraph{(In)approximability beyond Posted-Pricing Mechanisms}
So far, we have restricted the broker to running a posted-pricing mechanism.
In the real world, however, the broker may run any arbitrary mechanism to maximize her profit.
Thus, we now consider the case where the broker runs any BNIC and IR mechanism that maximizes her profit.
The seminal result by~\cite{myerson1983efficient} characterizes one such optimal mechanism, which even turns out to be DSIC:
\begin{theorem}[\cite{myerson1983efficient}]\label{thm:myerson}
    Suppose the buyer's distribution $F$ and the seller's distribution $G$ are regular with bounded supports. Then, among all BNIC and IR mechanisms, the broker's expected profit is maximized by a mechanism that transfers the item from the seller to the buyer if and only if $\phi_F(v) \ge \phi_G(c)$.
\end{theorem}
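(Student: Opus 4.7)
The plan is to follow the standard Myersonian virtual-surplus analysis, adapted to the three-party structure where the broker sits between the buyer and the seller. Let $x(v,c)\in[0,1]$ denote the trade probability, $t_b(v,c)$ the payment from the buyer to the broker, and $t_s(v,c)$ the payment from the broker to the seller, so that the broker's profit is $t_b(v,c)-t_s(v,c)$. The interim quantities I would work with are $\bar{x}_b(v) = \E_{c\sim G}[x(v,c)]$, $\bar{x}_s(c) = \E_{v\sim F}[x(v,c)]$, and the interim utilities $U_b(v) = v\bar{x}_b(v) - \E_c[t_b(v,c)]$ and $U_s(c) = \E_v[t_s(v,c)] - c\bar{x}_s(c)$.

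First, I would apply the envelope theorem from BNIC separately to the buyer and the seller: BNIC on the buyer's side forces $\bar{x}_b$ to be nondecreasing and $U_b'(v) = \bar{x}_b(v)$ a.e., so $U_b(v) = U_b(\underline{v}) + \int_{\underline{v}}^{v}\bar{x}_b(z)\,dz$; the analogous identity on the seller's side gives $U_s(c) = U_s(\bar{c}) + \int_c^{\bar{c}}\bar{x}_s(z)\,dz$ with $\bar{x}_s$ nonincreasing. IR forces $U_b(\underline{v}) \ge 0$ and $U_s(\bar{c}) \ge 0$, and since both appear as additive costs to the broker, her profit-maximizing choice sets them to zero. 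Integration by parts then rewrites the expected payments as $\E_v[\E_c[t_b(v,c)]] = \E_{v,c}[x(v,c)\,\phi_F(v)]$ and $\E_c[\E_v[t_s(v,c)]] = \E_{v,c}[x(v,c)\,\phi_G(c)]$, where $\phi_F$ and $\phi_G$ are the buyer's and seller's virtual value functions as defined in the paper.

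The broker's expected profit therefore reduces to the virtual-surplus objective
\begin{align*}
\E_{v\sim F,\, c\sim G}\bigl[x(v,c)\bigl(\phi_F(v)-\phi_G(c)\bigr)\bigr].
\end{align*}
Maximizing pointwise in $x(v,c)\in[0,1]$ immediately gives the claimed allocation rule: trade iff $\phi_F(v)\ge\phi_G(c)$. To complete the argument I would verify that this rule is implementable under the regularity hypothesis: since $\phi_F$ is nondecreasing and $\phi_G$ is nondecreasing, the set $\{v:\phi_F(v)\ge\phi_G(c)\}$ is an upper ray in $v$ for each $c$, and symmetrically a lower ray in $c$ for each $v$. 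Hence $\bar{x}_b$ is nondecreasing and $\bar{x}_s$ is nonincreasing, satisfying the monotonicity conditions needed for BNIC. Finally, to establish DSIC (a strengthening beyond what the envelope derivation a priori delivers), I would exhibit pointwise threshold prices: for each realized $c$, the buyer faces a posted price equal to the smallest $v$ at which $\phi_F(v)\ge\phi_G(c)$, and symmetrically the seller faces a posted payment equal to the largest $c$ at which the inequality still holds for the realized $v$; truthful reporting is then a dominant strategy by the usual Myerson threshold-payment argument.

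The main obstacle is essentially bookkeeping rather than conceptual: the envelope-plus-integration-by-parts step must be executed cleanly so that the buyer's and seller's contributions combine into $\phi_F(v)-\phi_G(c)$ with the correct signs, and the regularity hypothesis must be used in the right place to guarantee both monotonicity of the allocation (for implementability) and the equivalence between the pointwise maximizer and a well-defined threshold rule (for DSIC). No new technique beyond standard mechanism design is required; the statement is recovered verbatim once these standard computations are assembled.
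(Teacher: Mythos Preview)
Your proposal is correct and is the standard Myersonian virtual-surplus argument; the paper does not actually supply its own proof of this theorem but cites it directly from \cite{myerson1983efficient}. The approach you outline is, however, exactly the template the paper follows when it proves the public-seller and public-buyer analogues (Theorems~\ref{singleton-seller-ic}--\ref{thm:publicseller-char} and Theorem~\ref{singleton-buyer-gft}): derive the envelope identity, integrate by parts to express the broker's profit as expected virtual surplus minus the boundary utilities, set the boundary utilities to zero via IR, maximize pointwise, and then exhibit threshold payments to get DSIC.
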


Note that due to the strategic behavior of the broker, the above mechanism may result in strictly less trading than a mechanism that maximizes GFT. 
Indeed, if one considers the symmetric setting where both buyer's and seller's valuations have uniform distribution $U[0,1]$, one can observe that under the above mechanism, trade occurs if and only if $\phi_F(v) = 2v - 1 \ge \phi(c) = 2c$, \ie if and only if $v-c \ge 1/2$. However, as shown by~\cite{myerson1981optimal}, a BNIC and IR mechanism that maximizes the gains-from-trade in this example transfers the item if and only if $v - c \ge 1/4$. Thus, the existence of the broker interferes with social welfare. For uniform distributions, however, we show that a $1/2$-approximation to the first-best GFT is still possible, and that this bound is even tight.
\begin{theorem}[Uniform distribution]\label{thm:uniform}
    If $F$ and $G$ are uniform distributions, then $\GFT \ge \FB/2$ for any BNIC and IR mechanism that maximizes the broker's expected profit. Furthermore, this bound is tight in the sense that there exists a problem instance in which the approximation ratio is exactly $1 / 2$.
\end{theorem}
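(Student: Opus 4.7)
The plan is to first invoke Theorem~\ref{thm:myerson} to reduce the profit-maximizing mechanism to an explicit threshold rule. Writing $F = U[a_F, b_F]$ and $G = U[a_G, b_G]$, the virtual valuations are $\phi_F(v) = 2v - b_F$ and $\phi_G(c) = 2c - a_G$, so the optimal mechanism trades if and only if $v - c \ge T$ where $T := (b_F - a_G)/2$. Setting $D := b_F - a_G$ and letting $h$ denote the density of $W := v - c$, we have $\FB = \int_0^D w\, h(w)\,dw$ and $\GFT = \int_T^D w\, h(w)\,dw$, so $\GFT \ge \FB/2$ is equivalent to $\int_{D/2}^D w\, h(w)\,dw \ge \int_0^{D/2} w\, h(w)\,dw$. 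The substitution $w = D - u$ in the left integral reduces this to the pointwise claim
\[
\psi(u) := (D - u)\, h(D - u) - u\, h(u) \ge 0 \qquad \text{for all } u \in [0, D/2].
\]

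Since $W$ is the difference of two independent uniforms, its density is trapezoidal: writing $L := a_F - b_G$, $\alpha := \min(b_F - a_F,\; b_G - a_G)$, $B := \max(b_F - a_F,\; b_G - a_G)$, and $M := (L+D)/2$, one has $h(w) \propto \min(w - L,\; D - w,\; \alpha)$ on $[L, D]$ (and zero elsewhere), with $M$ being the symmetry point of $h$. I would then split by which argument of each min is active: for $u \le M$, the min for $h(u)$ is $\min(u - L, \alpha)$ and the min for $h(D-u)$ is $\min(u, \alpha)$; for $u > M$ (a range nonempty only when $L < 0$) the roles swap. The edge case $u < L$ (possible only if $L > 0$) is immediate since $h(u) = 0$ forces $\psi(u) = (D-u)h(D-u) \ge 0$.

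In every resulting sub-case, $\psi(u)$ simplifies---up to the common positive factor $1/((b_F-a_F)(b_G-a_G))$---to one of $2u(M - u)$, $u(D - u - \alpha)$, $\alpha(D - 2u)$, or $(D - u)(\alpha - u)$, each of which is nonnegative on the relevant range using $u \le D/2$ together with the \emph{automatic} bound $L \le D - 2\alpha$ (equivalently $\alpha \le B$, which holds by definition of $\alpha$ as the minimum). The most delicate case is $L < 0$ with $u \in (M, D/2]$, where both mins lose their natural $u$-vs-$(D-u)$ ordering; the key observation here is that $u \le D/2 < B$ throughout this subrange, which forces $\min(D - u - L, \alpha) = \alpha$ and collapses $\psi(u)$ to $\alpha(D - 2u) \ge 0$. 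For tightness, the symmetric instance $F = G = U[0, 1]$ has $T = 1/2$, and direct integration gives $\FB = 1/6$ and $\GFT = 1/12$, so the $1/2$-ratio is attained. The main obstacle is simply organizing the case analysis cleanly; once the trapezoidal form of $h$ is in hand each case reduces to a short algebraic check.
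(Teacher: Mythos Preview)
Your approach is correct and considerably more streamlined than the paper's. The paper normalizes $F$ to $U[0,1]$ via a scaling lemma (Lemma~\ref{lem:normalization}) and then performs an exhaustive case analysis on the position of the seller's support $[a,b]$ relative to $[0,1]$ and to the threshold $T_a$---roughly a dozen nontrivial case/subcase pairs---computing $\FB$ and $\GFT$ explicitly in each and checking the ratio by hand. Your reduction to the pointwise inequality $\psi(u)\ge 0$ via the trapezoidal density of $W=v-c$ replaces all of this with a handful of short algebraic checks; the key identity $D-L=\alpha+B$ is exactly what makes the bounds $u\le D/2$ and $\alpha\le B$ do all the work.

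Two minor imprecisions to fix when you write it up. First, the sub-case $L>0$, $\alpha<u\le L+\alpha$ (so $u-L\le\alpha<u$) gives $\psi(u)\propto (D-u)\alpha - u(u-L)$, which is not one of your four listed forms; it is still nonnegative since $(D-u)\alpha - u(u-L)\ge (D-u)\alpha - u\alpha=\alpha(D-2u)\ge 0$. Second, in the range $u\in(M,D/2]$ the expression does not always ``collapse to $\alpha(D-2u)$'': when $u<\alpha$ you instead get $u(D-u-\alpha)$ or $0$, both still nonnegative. (Relatedly, the listed form $(D-u)(\alpha-u)$ corresponds to the combination $u>\alpha$, $D-u\le\alpha$, which is vacuous for $u\le D/2$.) The paper's heavier route does buy one thing: it records the exact infimum ratio in each configuration (e.g.\ $2/3$ or $4/7$ in several cases), whereas your pointwise argument only delivers the uniform $1/2$ that the theorem requires.
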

The main technical subtlety in proving Theorem~\ref{thm:uniform} lies in handling a number of cases depending the supports of the uniform distributions. To simplify the possible cases here, we introduce a result stating that $\GFT / \FB$ is invariant to stretches and shifts in the buyer's and seller's distributions as long as we apply the same transformations to both distributions, formally presented in Lemma~\ref{lem:normalization}.

On the other hand, if we consider general distributions, the extent to which the broker's strategic behavior degrades the GFT can be significant, formally stated as follows:
\begin{theorem}[Inapproximability]\label{thm:inapx-general}
    For any $\eps > 0$, there exists a problem instance such that $\GFT < \eps \cdot \FB$ for any BNIC and IR mechanism that maximizes the broker's expected profit.
\end{theorem}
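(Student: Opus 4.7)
My plan is to construct a parametric family of instances, indexed by $H > 0$, for which $\FB$ grows logarithmically in $H$ while any profit-maximizing BNIC and IR mechanism attains only a bounded $\GFT$. Sending $H \to \infty$ then produces an arbitrarily small ratio.

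Take the seller's distribution to be $G = U[0, 1]$, so $\phi_G(c) = 2c \in [0, 2]$. Take the buyer's distribution $F$ to be the equal-revenue distribution truncated at $H$: set $F(v) = 1 - 1/v$ for $v \in [1, H)$, together with a point mass of $1/H$ at $v = H$. Both distributions are regular and bounded-support. A direct computation gives $\phi_F(v) = v - (1 - F(v))/f(v) = v - v = 0$ for every $v \in [1, H)$, while the top atom carries virtual value $\phi_F(H) = H$ (formally obtained by spreading the atom uniformly on $[H, H + \delta]$ and sending $\delta \to 0$).

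Next, I would invoke Theorem~\ref{thm:myerson}: since the tie set $\{(v,c) : \phi_F(v) = \phi_G(c)\}$ is contained in the $F \otimes G$-null set $\{c = 0\}$ (for $H > 2$, no tie arises at the atom because $\phi_F(H) = H > 2 \ge \phi_G(c)$), every profit-maximizing BNIC and IR mechanism coincides, up to a null set, with the allocation rule $\Ind{\phi_F(v) \ge \phi_G(c)}$, and hence induces the same $\GFT$. On the continuous part $v \in [1, H)$, no trade occurs almost surely since $\phi_F(v) = 0 < 2c$ for $c > 0$; on the top atom $v = H$, trade always occurs since $\phi_F(H) = H \ge 2c$ for all $c \in [0, 1]$. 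Consequently,
\begin{align*}
    \GFT \;=\; \Pr(v = H) \cdot \Exu{c \sim U[0,1]}{H - c} \;=\; \frac{1}{H}\left(H - \frac{1}{2}\right) \;=\; 1 - \frac{1}{2H},
\end{align*}
while, since $v \ge 1 \ge c$ almost surely,
\begin{align*}
    \FB \;=\; \Exu{v \sim F}{v} - \Exu{c \sim G}{c} \;=\; (\ln H + 1) - \frac{1}{2} \;=\; \ln H + \frac{1}{2}.
\end{align*}
Therefore $\GFT / \FB = \Theta(1/\ln H)$, and for any $\eps > 0$ I may choose $H$ large enough to guarantee $\GFT < \eps \cdot \FB$.

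The main subtlety I expect to handle carefully is the atom at $v = H$: since the virtual-value formula assumes a density, I would justify $\phi_F(H) = H$ via the spread-and-limit argument above (or equivalently via the slope of the revenue curve at quantile $1/H$), and then verify that the broker has no profit-increasing deviation that trades on the continuous part $\{v < H\}$, which follows from the vanishing of $(\phi_F(v) - \phi_G(c))^+$ off the atom. With that detail settled, the computations of $\GFT$ and $\FB$ are straightforward and the inapproximability is immediate.
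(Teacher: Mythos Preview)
Your proposal is correct and follows essentially the same route as the paper: pair a uniform seller $G=U[0,1]$ with an equal-revenue buyer so that $\phi_F\equiv 0$ on the bulk of the support (forcing almost no trade there under any optimal mechanism) while $\FB$ grows like $\ln H$. The only difference is in how the truncation is handled: rather than placing an atom at the top and justifying $\phi_F(H)=H$ by a spread-and-limit argument, the paper extends $F(v)=1-a/v$ linearly via its tangent line on an interval $(b,2b]$, which keeps $F$ absolutely continuous and regular so that Theorem~\ref{thm:myerson} applies directly without any limiting step.
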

The proof here is based on the characterization in Theorem~\ref{thm:myerson}. To briefly summarize, we start by trying to come up with a regular distribution for the buyer such that $\phi_F(v) < \phi_G(c)$ for all $v$ and $c,$ \ie such that trade never occurs under any BNIC and IR mechanism that maximizes the broker's expected profit. Solving a resulting differential equation, we obtain $F(x) = -a/x + 1$ for some constant $a > 0$.
This distribution, however, has an unbounded support of $[a,\infty)$, so we cannot guarantee the optimality of the mechanism given in Theorem~\ref{thm:myerson}, which only applies to bounded supports. 
Indeed, one can see that the above choice of $F$ results in $\Pro = 0,$ whereas a simple posted-pricing mechanism can achieve a strictly positive expected profit (assuming the problem instance is nontrivial, \ie $\FB > 0$). Thus, we need to truncate this distribution carefully to obtain the desired result.

One natural follow-up question to Theorem~\ref{thm:inapx-general} is to what extent such inapproximability persists.
We first consider a restricted set of problem instances where exactly one of the buyer and seller has public value.
Formally speaking, \emph{public-buyer} refers to the case where the buyer's ex-post valuation $v$ is deterministically known to the broker, and similarly for \emph{public-seller}.
We require analogues of Theorems 3 and 4 in~\cite{myerson1983efficient} to characterize an optimal BNIC and IR mechanism tailored to these specific problem instances, since we cannot directly apply~\cite{myerson1983efficient} when there are probability distributions with singleton support. These lemmas yield the following inapproximability result:
\begin{theorem}[Inapproximability with public seller or public buyer]\label{thm:inapx-public}
    For any $\eps > 0$, there exists (i) a continuous distribution $F$ and a singleton distribution $G$ and (ii) a singleton distribution $F$ and a continuous distribution $G$ such that $\GFT < \eps \cdot \FB$ for any BNIC and IR mechanism that maximizes the broker's expected profit.
\end{theorem}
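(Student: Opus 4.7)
My plan is to follow the blueprint of Theorem~\ref{thm:inapx-general} while coping with the fact that one side is now a singleton distribution, which places the problem outside the scope of Theorem~\ref{thm:myerson}. I proceed in two steps: first establish the right characterization of the broker's optimum, then exhibit bad distributions on the continuous side.

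\emph{Step 1: characterization.} For the public-seller instance with deterministic cost $c_0$, I would first show that any BNIC+IR mechanism the broker might run is dominated in expected profit by a take-it-or-leave-it offer at the price $p^\ast = \arg\max_{p}(p - c_0)(1 - F(p))$ to the buyer (with payment $c_0$ to the seller upon trade), so trade happens iff $v \ge p^\ast$. This reduces to a standard single-agent screening problem on the buyer side, for which Myerson's lemma and the monotonicity of the allocation rule pin down a posted-price mechanism as optimal irrespective of whether $F$ is regular. The public-buyer case is dual: the broker's optimum is to post the price $q^\ast = \arg\max_{q \le v_0}(v_0 - q) G(q)$ to the seller, with trade iff $c \le q^\ast$. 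These two reductions are the singleton analogues of Theorems~3--4 of~\cite{myerson1983efficient} that the statement alludes to.

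\emph{Step 2: construction.} For case (i), fix the target $\eps > 0$, set $\alpha = 1 - \eps$ and $c_0 = 0$, and let $F$ be a continuous distribution on $[1, H + \delta]$ whose tail is $1 - F(v) = v^{-\alpha}$ on $[1, H]$, with the residual mass $H^{-\alpha}$ smeared (say uniformly) as a density spike on $[H, H + \delta]$. For any sufficiently small $\delta > 0$, the revenue function $p(1 - F(p))$ equals $p^{1-\alpha}$ on $[1, H]$ (strictly increasing) and is strictly decreasing on $[H, H+\delta]$, so by Step~1 the broker's unique optimal price is $p^\ast = H$ and trade happens only inside the spike. A direct integration gives $\GFT \sim H^{1-\alpha}$ while $\FB = \mathbb{E}[v] \sim H^{1-\alpha}/(1-\alpha)$, so $\GFT/\FB \sim 1 - \alpha = \eps$ for $H$ taken sufficiently large. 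Case (ii) is handled by the substitution $c = v_0 - v$: pick $v_0 \ge H + \delta$ and set $G(c) = 1 - F(v_0 - c)$ for the $F$ above, which turns the seller-side profit function $(v_0 - q)G(q)$ into exactly $p(1 - F(p))$ with $p = v_0 - q$, so the broker's unique optimum is $q^\ast = v_0 - H$ and an identical computation yields the same $O(\eps)$ ratio.

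\emph{Main obstacle.} The principal technical work lies in Step~1, because the argument underlying Theorem~\ref{thm:myerson} in~\cite{myerson1983efficient} is not stated for degenerate distributions and must be reproved, paying careful attention to the IR constraint on the singleton side. The distributional construction of Step~2 is essentially mechanical, but one has to pick the smoothing width $\delta$ small enough to simultaneously preserve (a) the uniqueness of the monopoly optimum at (or arbitrarily close to) $H$ and (b) the asymptotic expansions of $\GFT$ and $\FB$; this perturbation step is routine but needs to be carried out explicitly so that the ratio $1-\alpha$ survives the smoothing.
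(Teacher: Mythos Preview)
Your proposal is correct and would yield a valid proof, but it follows a genuinely different route from the paper's.

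\textbf{Characterization step.} Both you and the paper first re-derive the broker's optimum when one side is a singleton, since Theorem~\ref{thm:myerson} does not cover degenerate distributions. The paper does this by proving singleton-side analogues of Theorems~3--4 of \cite{myerson1983efficient} (its Theorems~4.2--4.5), obtaining a \emph{virtual-value} characterization that requires the continuous side to be regular. You instead reduce directly to a single-agent monopoly (respectively monopsony) problem and invoke the Riley--Zeckhauser fact that a posted price is optimal regardless of regularity. Both are correct; yours is shorter but foregoes the virtual-value language.

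\textbf{Construction step.} For the public-seller case the paper reuses the truncated equal-revenue distribution $F(v)=1-a/v$ from Theorem~\ref{thm:inapx-general} (regular, with seller fixed at $c=a$) and lets $b\to\infty$. You instead take a heavy-tailed Pareto $1-F(v)=v^{-\alpha}$ with $\alpha<1$, mollified near the truncation point; this distribution is \emph{irregular}, which is why you needed the regularity-free characterization in Step~1. For the public-buyer case the paper builds a bespoke seller distribution from scratch, whereas your reflection $c\mapsto v_0-c$ reduces (ii) to (i) in one line.

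\textbf{What each buys.} The paper's argument, by staying inside regular distributions, actually proves the slightly stronger statement that inapproximability persists even for regular $F$ (resp.\ $G$); the stated theorem does not ask for this, so your proof suffices. Your route is more economical, especially the duality trick for case~(ii). Two small points to tighten: (a) your ratio tends to exactly $1-\alpha=\eps$, so pick $\alpha=1-\eps/2$ to land strictly below $\eps$; (b) since the theorem quantifies over \emph{all} profit-maximizing mechanisms, you should note explicitly that strict single-peakedness of $p\mapsto p(1-F(p))$ forces every optimal BNIC allocation rule (not merely every optimal posted price) to equal $\mathbf{1}\{v\ge H\}$ a.e.\ --- this follows from the integration-by-parts representation of profit as $x(a)J(a)+\int J\,dx$ with $J(p)=p(1-F(p))$.
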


Furthermore, we observe that inapproximability also carries over when the agents are symmetric.
\begin{theorem}\label{thm:inapx-symmetric}
    For any $\eps > 0$, there exist identical distributions $F = G$ such that $\GFT < \eps \cdot \FB$ for any BNIC and IR mechanism that maximizes the broker's expected profit.
\end{theorem}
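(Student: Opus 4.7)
The plan is to apply Theorem~\ref{thm:myerson}: for any regular identical $F = G$ with bounded support, the profit-maximizing mechanism trades if and only if $\phi_F(v) \ge \phi_G(c)$, which in the symmetric setting rearranges to
\begin{align*}
v - c \;\ge\; \frac{1 - F(v)}{f(v)} + \frac{F(c)}{f(c)}.
\end{align*}
Making $\GFT / \FB$ small therefore reduces to engineering a single $F$ whose combined inverse hazard rates dominate the efficient surplus $v - c$ over most of $\{v \ge c\}$, while $\FB = \mathbb{E}_{v,c \sim F}[(v - c)^+]$ remains large.

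I would start from the same template as in the proof of Theorem~\ref{thm:inapx-general}: solving the ODE $v f(v) = 1 - F(v)$ produces the Pareto tail $F(x) = 1 - a/x$ on $[a,\infty)$, which satisfies $\phi_F \equiv 0$ and $\phi_G(c) = c^2/a \ge 0$, so no trade would ever occur on the untruncated distribution. Since Theorem~\ref{thm:myerson} requires bounded support, I truncate at $H$ to obtain $F_H(x) = (1 - a/x)/(1 - a/H)$ on $[a, H]$; an explicit calculation gives $\phi_{F_H}(v) = v^2/H$ and $\phi_{G_H}(c) = c^2/a$, so the Myerson trading region becomes $\{v \ge c\sqrt{H/a}\}$, a thin sliver anchored at the upper-right corner of the support square $[a,H]^2$.

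The next step is to compare $\GFT$ and $\FB$ asymptotically in $H$. A direct integration shows that both $\FB$ and the $\GFT$ contribution from the trading sliver scale like $a\log(H/a)$ on the plain truncated Pareto, yielding only a constant ratio rather than a vanishing one. To drive the ratio to zero I would reshape the low-value part of the support, \eg by replacing the bottom piece with a sharper Pareto tail $1 - (a/x)^{\beta}$ for some $\beta > 1$ and gluing to the original Pareto above a chosen junction point. The sharpened tail amplifies $F(c)/f(c)$ for small $c$, which inflates $\phi_G$ and further shrinks the Myerson trading region, while the upper Pareto piece preserves the logarithmic growth of $\FB$. Choosing $\beta$ and $H$ in tandem should push $\GFT/\FB$ below any target $\eps$.

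The hard part is the composite bookkeeping: verifying monotonicity of the glued virtual-valuation functions so that Theorem~\ref{thm:myerson} continues to characterize the profit-maximizing mechanism, identifying the exact Myerson trading region across the junction, and bounding its $\GFT$ contribution uniformly as the scale parameters diverge. Once regularity is checked and the trading region is localized to an asymptotically lower-order slice of $[a,H]^2$, a routine asymptotic comparison as $H \to \infty$ delivers $\GFT < \eps \cdot \FB$ for any prescribed $\eps > 0$.
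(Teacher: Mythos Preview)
Your diagnosis of the normalized truncated Pareto is correct: with $F_H(x) = (1 - a/x)/(1 - a/H)$ one has $\phi_{F_H}(v) = v^2/H$ and $\phi_{G_H}(c) = c^2/a$, and a direct computation indeed gives $\GFT/\FB \to 1/2$ as $H \to \infty$, so normalization alone does not suffice.

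The gap is in your proposed fix. Gluing a sharper Pareto $1 - (a/x)^\beta$ with $\beta > 1$ on a lower piece $[a,m]$ to the ordinary Pareto above $m$ destroys buyer-side regularity: on the lower piece $(1 - F(v))/f(v) = v/\beta$, so $\phi_F(v) = v(1 - 1/\beta) > 0$, while on the upper Pareto piece $\phi_F \equiv 0$ (or $v^2/H$ after truncation). Hence $\phi_F$ drops from $m(1 - 1/\beta)$ to $0$ (or to $m^2/H$) at the junction, which is non-monotone unless $H \le m\beta/(\beta - 1)$, forcing the upper piece to be a bounded fraction of the support and defeating the purpose. More fundamentally, because $F = G$, any modification that inflates the seller's $F(c)/f(c)$ on the lower piece simultaneously inflates the buyer's $\phi_F$ there, which \emph{opens} new trading region near small $c$ rather than closing it: for $c$ near $a$ the trade condition becomes $v - c \gtrsim a/(\beta - 1)$, easy to satisfy. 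So the sharper-tail intuition points in the wrong direction.

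The paper avoids all of this by reusing verbatim the tangent-line extension from Theorem~\ref{thm:inapx-general}: take $F(v) = 1 - a/v$ on $[a,b]$ and extend linearly on $(b,2b]$ so that $F(2b) = 1$. This keeps $\phi_F(v) = 0$ on the entire body $[a,b]$ (the Pareto is \emph{not} rescaled), and $\phi_F(v) = 2(v - b)$ on the tail; one checks that $\phi_G$ is increasing throughout, so the distribution is regular on both sides. Since $\phi_G(c) \ge a > 0$ everywhere, trade can only occur when $v \in (b,2b]$, a set of mass $a/b$. A crude bound then gives $\GFT = O(a)$ uniformly in $b$, while a direct integration shows $\FB \sim a\ln(b/a)$, so $\GFT/\FB \to 0$ as $b \to \infty$. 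The key move you are missing is to truncate by \emph{extension at the top} rather than by normalization, precisely so that $\phi_F = 0$ survives on the main body.
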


\gpedit{
Finally, we show that in the general-distribution and public-seller settings, the inapproximability of the first-best gains-from-trade carries over to the first-best social welfare as well.
\begin{theorem}\label{thm:inapx-sw}
    For any $\epsilon > 0,$ there exists distributions $F$ and $G$ such that $\SW < \epsilon \FBW$ for any BNIC and IR mechanism that maximizes the broker's expected profit. Furthermore, this statement holds even if we restrict $G$ to be deterministic.
\end{theorem}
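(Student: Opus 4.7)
My plan is to lift the GFT inapproximability of Theorems~\ref{thm:inapx-general} and~\ref{thm:inapx-public} to SW inapproximability via the elementary identities $\SW = \Exu{c \sim G}{c} + \GFT$ and $\FBW = \Exu{c \sim G}{c} + \FB$, the latter following from $\max(v,c) = c + (v-c)^+$. Writing $\mu_c := \Exu{c \sim G}{c}$ and combining with a GFT bound of the form $\GFT < \epsilon' \FB$, we obtain
\begin{align*}
    \frac{\SW}{\FBW} = \frac{\mu_c + \GFT}{\mu_c + \FB} \le \frac{\mu_c}{\FB} + \epsilon'.
\end{align*}
Hence it suffices to exhibit problem instances that additionally satisfy $\mu_c/\FB < \epsilon/2$, and then apply the relevant GFT theorem with $\epsilon' < \epsilon/2$.

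For the general case, I would adopt the construction from Theorem~\ref{thm:inapx-general}, where $F$ is an (approximately) equal-revenue distribution truncated at some large $T$ and $G$ is a regular distribution on a bounded interval $[c_0, c_1]$ with $\phi_G \ge 0 \ge \phi_F$. Since $F$'s support sits above $G$'s, $\FB = \Exu{v \sim F}{v} - \mu_c$, and $\Exu{v \sim F}{v}$ grows without bound as $T \to \infty$, while $\mu_c \le c_1$ remains bounded. Thus $\mu_c/\FB$ can be made arbitrarily small by choosing $T$ sufficiently large, yielding the claim.

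For the deterministic seller case, I would invoke the public-seller construction underlying Theorem~\ref{thm:inapx-public}, additionally placing the singleton $G$ at $c_0 = 0$. Then $\mu_c = 0$, so $\SW = \GFT$ and $\FBW = \FB$, and the SW bound collapses exactly to the GFT bound.

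The main subtlety is in confirming that the prior constructions are flexible enough to accommodate these quantitative parameters: that the truncation point $T$ in Theorem~\ref{thm:inapx-general} can be taken arbitrarily large while still preserving the GFT bound (and in particular driving $\FB \to \infty$), and that Theorem~\ref{thm:inapx-public}'s construction admits the choice $c_0 = 0$ without breaking the characterization of the broker's profit-maximizing mechanism, which in the public-seller setting reduces to monopoly pricing on the buyer side. Both verifications should follow from a routine re-inspection of the relevant proofs, possibly with minor modifications to the support or tie-breaking of the buyer's distribution so that the optimal posted price remains unique and concentrated at the top of the support.
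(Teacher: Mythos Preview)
Your overall strategy matches the paper's: both use $\SW/\FBW = (\mu_c + \GFT)/(\mu_c + \FB)$ and observe that when $\mu_c$ stays bounded while $\FB \to \infty$ (and $\GFT/\FB \to 0$) the ratio tends to zero. For the general-distribution case your argument is correct and essentially identical to the paper's, which bounds $\mu_c \le M$ uniformly over the family and uses $\FB \to \infty$ as the truncation parameter $b \to \infty$ in the Theorem~\ref{thm:inapx-general} construction.

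There is, however, a genuine gap in your treatment of the deterministic-seller case. You propose to place the seller at $c_0 = 0$ so that $\SW = \GFT$ and $\FBW = \FB$ identically. But with $c_0 = 0$ the public-seller trade condition $\phi_F(v) \ge c_0$ becomes $\phi_F(v) \ge 0$, and in the Theorem~\ref{thm:inapx-public} construction one has $\phi_F(v) = 0$ on all of $[a,b]$ by design (this is exactly the equal-revenue property). Hence trading on $[a,b]$ is profit-indifferent for the broker, and among the profit-maximizing mechanisms is one that trades for every $v$, giving $\GFT = \FB$. Since the statement demands the bound for \emph{every} profit-maximizing BNIC and IR mechanism, your choice of $c_0 = 0$ destroys the very GFT inapproximability you rely on. The ``minor modifications to tie-breaking'' you mention would actually require perturbing $F$ away from equal-revenue, which is not the routine verification you suggest.

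The fix is simply to drop the $c_0 = 0$ shortcut and apply your own general-case argument: the paper reuses the public-seller instance from Theorem~\ref{thm:inapx-public} as-is, with $G$ the point mass at $a > 0$. Then $\mu_c = a$ is fixed while $\FB \sim a\ln b \to \infty$ as $b \to \infty$, so $\mu_c/\FB \to 0$, and your inequality $\SW/\FBW \le \mu_c/\FB + \epsilon'$ (equivalently the paper's $\frac{M/\FB + \delta}{M/\FB + 1}$) finishes the proof exactly as in the general case.
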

}


\subsection{Further Related Work}
\paragraph{Bilateral Trade}
The theory of bilateral trade dates back to the seminal result by~\cite{myerson1983efficient}, where the authors proved that it is impossible to have an ex-post efficient mechanism that satisfies BNIC, IR and WBB.
Since then, there has been a number of attempts to identify the best possible approximation ratio to the first-best gains-from-trade.
\cite{blumrosen2016almost} and \cite{blumrosen2016approximating} prove that a constant approximation ratio is possible for certain classes of distributions, \eg regular and MHR, and provide some worst-case upper bounds on the approximation ratios.
More recently, \cite{deng2022approximately} provide a mechanism that achieves a $1 / 8.23$-approximation to the first-best gains-from-trade. \cite{fei2022improved} significantly improves this to $1 / 3.15$ and further characterizes the exact approximation factor of the seller-pricing mechanism, given that the buyer's distribution satisfies MHR. Most recently, \cite{dobzinski2024bilateral} study the problem of approximating gains-from-trade and social-welfare for bilateral trade when the buyer's and seller's valuations can be correlated.
Finally, another line of literature in bilateral trade focuses on identifying the best-possible approximation ratio for the class of posted-pricing mechanisms~\citep{liu2023improved,kang2019fixed,kang2022fixed}.
It is worth noting that some of the above work focuses on approximating the social welfare defined by $\Exu{v \sim F, c \sim G}{c + (v-c)\cdot \Ind{v \ge c}}$, which is much easier to approximate than gains-from-trade, since social welfare attains positive value even if trade does not occur.

\paragraph{Bilateral Trade with a Broker}
In comparison to the significant developments in bilateral trade, bilateral trade with a broker has been less studied, despite its significance and even its introduction in the same seminal paper by~\cite{myerson1983efficient}. In terms of prior work on this problem, \cite{myerson1983efficient} characterize one BNIC and IR mechanism that maximizes the broker's expected profit.
\cite{zhang2019efficient} studies the impact of an imperfectly-informed broker who has partial information about the buyer's and seller's values, \eg the broker knows the the bounded interval in which the values are contained.
\cite{kuang2023profit} consider a variant of the standard bilateral trade setting in which a mediator commits to a mechanism that possibly includes multiple rounds of private communication and runs a posted-pricing mechanism after the communication ends.
Notably, their setting does not consider prior information on the buyer's and seller's distributions, \ie their setting considers prior-independent mechanisms.
Thus, certain information should be exchanged to induce an efficient trade, and the solution concept therein is a perfect Bayes equilibrium.
In a similar vein, \cite{vcopivc2008robust} study robust prior-independent mechanisms when the buyer's and seller's valuations are discounted over time and hence both agents are eager to have the trade occur as soon as possible. In this setting, the mediator keeps the reported valuations of the buyer and the seller privately while trade is incompatible. Then, after trade becomes compatible, the mediator discloses the agreement and trade occurs at the agreed price.
We emphasize that none of the literature has studied the approximation ratio for social welfare or gains-from-trade in the standard bilateral trade with broker model introduced by~\cite{myerson1983efficient}.

\paragraph{Two-sided Markets}
Several works consider an extension of bilateral trade that is typically referred to as two-sided markets or double auctions.
In this setting, there are multiple sellers and buyers, each of which is interested in selling his own items and buying some items of interest.
The seminal result by~\cite{mcafee1992dominant} presents a double-auction, trade-reduction mechanism that is IR, DSIC, BB, and achieves a high ex-post efficiency under a large market.
Since then, several works have studied prior-independent mechanisms for double-auction settings~\citep{deshmukh2002truthful,baliga2003market}. \cite{babaioff2020bulow} and~\cite{cai2024power} study the Bulow-Klemperer~\citep{bulow1994auctions} result for gains-from-trade in double-auction settings, analyzing the power of recruiting more buyers and sellers for prior-independent mechanisms, and
\cite{brustle2017approximating} and \cite{cai2021multi} study the approximate mechanism for bilateral trade in double auctions in both the single-dimensional and multi-dimensional settings.

\section{Problem Setup}\label{sec:model}
\paragraph{Bilateral Trade}
In the bilateral trade problem, a seller tries to sell an item to a buyer.
Both the buyer's and the seller's valuations are private and independently drawn from cumulative distributions (CDFs) $F$ and $G$, respectively, both of which are supported on $\R_{\ge 0}$.
Following~\cite{myerson1983efficient}, we write $v$ and $c$ to denote the buyer's and the seller's respective valuations.
We write $f$ and $g$ to denote the density functions of $F$ and $G$, respectively.

If trade occurs between the buyer and the seller with valuations $v$ and $c,$ respectively, then society realizes an increment of welfare of amount $v-c$.
If the allocation function of a mechanism (probability of trade) is $x(\tilde{v}, \tilde{c})$ given the reported valuations $\tilde{v}$ of the buyer and $\tilde{c}$ of the seller, then the \emph{gains-from-trade}\footnote{Note that this differs from social welfare since social welfare is $c$ even if trade does not occur.} (GFT) is defined by the expected amount of increased utility from trade, \ie
\begin{align*}
    \GFT = \Exu{v \sim F, c \sim G}{(v-c)\cdot x(v,c)},
\end{align*}
assuming truthful reporting of the buyer and the seller.
If there exists an omniscient social planner who can observe the private valuations, then he can enforce trade whenever the buyer's valuation is at least the seller's valuation.
Such an ideal mechanism induces $x(v,c) = \Ind{v \ge c}$, and the resulting GFT is defined as the \emph{first-best gains-from-trade} (\FB), \ie 
\begin{align*}
    \FB =  \Exu{v \sim F, c \sim G}{(v-c) \Ind{v \ge c}}.
\end{align*}
\gpedit{Similarly, the \textit{social welfare} (\SW) is defined by the expected utility from trade, \ie
\begin{align*}
    \SW = \Exu{c \sim G}{c} + \GFT,
\end{align*}
and the \textit{first-best social welfare} (\FBW) is defined as the social welfare of the ideal mechanism described above, \ie
\begin{align*}
    \FBW = \Exu{c \sim G}{c} + \FB.
\end{align*}}

A mechanism is \emph{ex-post efficient} if it achieves the first-best GFT.
However, the seminal result by~\cite{myerson1983efficient} asserts that it is not possible to construct a mechanism that is simultaneously (i) Bayesian-Nash incentive-compatible (BNIC), (ii) ex-interim individually rational\footnote{A mechanism is ex-interim IR if all the participants are not worse off by participating the mechanism when they know their own valuations but not the others' valuations.} (IR), (iii) weakly budget balanced\footnote{A mechanism is WBB if the payment from the buyer is always at least the payment to the seller in an ex-post manner.} (WBB), and (iv) ex-post efficient, if the intersection of the supports of $F$ and $G$ is an interval with nonzero measure.
Thus, the literature is typically focused on identifying the best-possible approximation factor to the first-best GFT of any BNIC, IR, and WBB mechanism~\citep{deng2022approximately,fei2022improved}.

\paragraph{Bilateral Trade with a Broker}
In our setup, in addition to the buyer and the seller, there is a broker who governs the entire trade, through whom the buyer and seller exclusively interact.
The broker is not interested in maximizing the GFT but instead tries to maximize her own profit by buying from the seller at a lower price and selling to the buyer at a higher price.
Following~\cite{myerson1983efficient}, we assume that the broker can be a net source or sink of monetary transfer but that the broker cannot own the item herself.
Alternatively, the broker realizes the value $0$ for the item and thus does not want to hold on to the item.

Essentially, the broker's \emph{direct} mechanism $\Mec = (x,p)$ is defined by an allocation function $x: \R_{\ge 0}^2 \to [0,1]$ that maps the reported valuations of the buyer and the seller to the probability that the item is transferred from the seller to buyer and a payment function $p: \R_{\ge 0}^2 \to \R_{\ge 0}^2$ that maps the reported valuations to the expected payment $p = (p_s,p_b),$ where $p_s$ is the expected payment from the broker to the seller and $p_b$ the expected payment from the buyer to the broker.\footnote{Note that we consider expected payments here, as the mechanism may randomize the payments.} Note here that, assuming truthful reporting from the buyer and the seller, the broker's expected net profit can be defined as 
\begin{align*}
    \Pro = \int_{a_2}^{b_2}\int_{a_1}^{b_1}\parans{p_b(v,c) - p_s(v,c)} \, dF(v) \, dG(c).
\end{align*}
Similarly, assuming truthful reporting, the buyer's expected utility if his true valuation is $v$ can be written as 
\begin{align*}
    u_b(v) = \int_{a_2}^{b_2} \parans{x(v,c) \cdot v - p_b(v,c)} \, dG(c),
\end{align*}
and the seller's expected utility if his true valuation is $c$ can be written as
\begin{align*}
    u_s(c) = \int_{a_1}^{b_1} \parans{p_s(v, c) - x(v, c) \cdot c} \, dF(v).
\end{align*}

\paragraph{Desiderata}
We are interested in designing an \emph{ex-interim individually-rational} mechanism so that each agent's ex-interim payoff after observing his own valuation without knowing the other's is always nonnegative, \ie such that
\begin{align*}
    u_b(v) \ge 0, u_s(c) \ge 0
\end{align*}
for any $v$ and $c$.

A broker's mechanism is \emph{Bayesian-Nash incentive-compatible} (BNIC) if it satisfies 
\begin{align*}
    \Exu{c \sim G}{x(v,c) \cdot v - p_b(v,c)} &\ge \Exu{c \sim G}{x(\tilde{v},c) \cdot v - p_b(\tilde{v},c)}
\end{align*}
for all $v$ and $\tilde{v}$ and
\begin{align*}
    \Exu{v \sim F}{p_s(v, c) - x(v, c) \cdot c} &\ge \Exu{v \sim F}{p_s(v, \tilde{c}) - x(v, \tilde{c}) \cdot c}
\end{align*}
for all $c$ and $\tilde{c}.$ The first inequality corresponds to the buyer's ex-interim incentive to truthfully report his private value, and the second inequality corresponds to the seller's ex-interim incentive to truthfully report his private value. 
By the revelation principle~\citep{myerson1981optimal}, any BNIC mechanism can be simulated by a BNIC direct mechanism, so we consider BNIC direct mechanisms without loss of generality.

A broker's mechanism is \emph{dominant-strategy incentive-compatible} (DSIC) if it satisfies
\begin{align*}
    x(v,c) \cdot v - p_b(v,c) &\ge x(\tilde{v},c) \cdot v - p_b(\tilde{v},c)
\end{align*}
for every $v, \tilde{v},$ and $c$ and
\begin{align*}
    p_s(v, c) - x(v, c) \cdot c &\ge p_s(v, \tilde{c}) - x(v, \tilde{c}) \cdot c,
\end{align*}
for every $c, \tilde{c},$ and $v.$
This is a strictly stronger notion than BNIC, \ie  DSIC implies BNIC, since each player is always at least as well off by reporting his true valuation regardless of the other player's valuation.

In this paper, we seek to find the approximation factor to the first-best GFT for different classes of mechanisms.
Formally, a mechanism is an $\alpha$-approximation to the first-best GFT if it satisfies $\GFT \ge \alpha \FB.$ 

\paragraph{Distribution Types}
Finally, we introduce several classes of distributions that will be useful in presenting our results. We say that the buyer's distribution is regular if $\phi_F(x) = x - (1-F(x))/f(x)$ is nondecreasing, and we say that the seller's distribution is regular if $\phi_G(x) = x + G(x)/g(x)$ is nondecreasing.
Note that this is a two-sided-market version of the standard regularity condition in single-sided auctions (\cite{myerson1981optimal}), where $\phi_F$ and $\phi_G$ are usually referred to as the virtual valuations of the buyer and seller, respectively.\footnote{In general, even if the density function does not exist, one can define the buyer's distribution to be regular if $\ln(1-F(x))$ is concave, and analogously for the seller.}
The hazard rates $h_b$ and $h_s$ of the buyer's and seller's distributions $F$ and $G$ are defined respectively as $h_b(x) = f(x)/(1-F(x))$ and $h_s(x) = g(x)/G(x)$.
We say that the buyer's (resp. seller's) distribution has monotone hazard rate (MHR) if $h_b(x)$ (resp. $h_s(x)$) is nondecreasing (resp. nonincreasing) for $x \ge 0$.
It is straightforward to check that MHR implies regularity, but not the other way around.
The family of MHR distributions includes the well-known Gaussian, exponential, and uniform distributions whose tails can be no heavier than an exponential distribution with hazard rate $0$.
However, regular distributions can be subsumed under a broader class of distributions with fatter tails, \eg $f(x) = x^{-(1+\alpha)}$ for some $\alpha \ge 1$.
Lastly, we often consider a symmetric setting in which the buyer's and seller's distributions are equivalent, \ie where $F = G$. \gpedit{Throughout our paper, we always assume that the expectations of $F$ and $G$ are finite.}


\section{Posted-Pricing Mechanisms}
We consider here a broker who implements a posted-pricing (fixed-price) mechanism, \ie she announces fixed prices to the buyer and seller, respectively.
Formally, a posted-pricing mechanism in our setting consists of two fixed prices $(p, q).$ The broker simultaneously offers to buy the item from the seller at price $q$ and sell the item to the buyer at price $p.$ Since the broker strategizes to maximize her profit, it follows that $p \ge q.$ After the broker makes her offers, the buyer and seller each decide whether to accept their respective offer in a \emph{take-it-or-leave} manner.
If both agree to their offers, then the trade occurs and the item is transferred from the seller to the buyer. Otherwise, no trade occurs.

Under a posted-pricing mechanism with prices $(p, q)$, the gains-from-trade and the broker's expected profit can be written as
\begin{align*}
    \GFT &= \int_{0}^{q}\int_{p}^{\infty} (v-c) \, dF(v) \, dG(c).
    \\
    \Pro &= \int_{0}^{q}\int_{p}^{\infty} (p - q) \, dF(v) \, dG(c).
\end{align*}

From the above definition, it is clear that any posted-pricing mechanism that maximizes the expected profit for the broker satisfies
\begin{align*}
    (p^*, q^*) &= \argmax_{(p, q) \in 
    \R_{\ge 0}^2} \int_{0}^{q}\int_{p}^{\infty} (p-q) \, dF(v) \, dG(c)\\
    &= \argmax_{(p, q) \in \R_{\ge 0}^2} (p - q)G(q)(1 - F(p)).
\end{align*}
We seek to analyze the GFT of any such optimal mechanism compared to the first-best GFT. 

We first provide the proof of Theorem~\ref{thm:pp-structure}.
\begin{proof}[Proof of Theorem~\ref{thm:pp-structure}]
    Observe first that
    \begin{align*}
        \GFT &= \int_{p}^{\infty}\int_{0}^{q} (v-c) \, dG(c) \, dF(v)
        \\
        &=
        \int_{p}^{\infty}\int_{0}^{q} (v-p + p-q + q - c) \, dG(c) \, dF(v)
        \\
        &=
        (p-q)(1-F(p))G(q) + \int_{p}^{\infty}\int_{0}^{q} (v-p) \, dG(c) \, dF(v) + \int_{p}^{\infty}\int_{0}^{q} (q - c) \, dG(c) \, dF(v).
        \\
        &=
        \Pro + G(q)\int_{p}^\infty (v-p) \, dF(v) + (1-F(p))\int_{0}^q (q-c) \, dG(c).
    \end{align*}
    Integrating by parts, we have that
    \begin{align*}
        \int_{p}^\infty(v-p) \, dF(v) 
        &= 
        \bigg[-(v-p)(1-F(v))\bigg]_{p}^\infty + \int_{p}^\infty (1-F(v)) \, dv 
        \\
        &= \int_{p}^\infty (1-F(v)) \, dv,
    \end{align*}
    where we use $1-F(v)$ instead of simply $F(v)$ for the integration-by-parts to make the first term exactly zero.\footnote{\gpedit{Since the expectation of $F$ is finite, we have that $\displaystyle v(1 - F(v)) = v\int_v^\infty dF(x) \le \int_v^\infty x \, dF(x)$
    and $\displaystyle \lim_{v \rightarrow \infty} \parans{\int_v^\infty x \, dF(x)} = \Exu{v \sim F}{v} - \lim_{v \rightarrow \infty} \parans{\int_0^v x \, dF(x)} = 0,$
    so $\displaystyle\lim_{v \rightarrow \infty} v(1 - F(v)) = 0.$}
    }

    Similarly, for the seller's integral, we have that
    \begin{align*}
        \int_{0}^q(q-c) \, dG(c)
        &=
        \bigg[(q-c)G(c)\bigg]_{0}^q + \int_{0}^q G(c) \, dc 
        \\
        &= \int_{0}^q G(c) \, dc.
    \end{align*}
    Hence, we have that
    \begin{align*}
        \GFT &= \Pro + G(q)\int_{p}^\infty (1-F(x)) \, dx +(1-F(p))\int_{0}^qG(x) \, dx\\
        &\ge
        \Pro + \min(G(q),1-F(p)) \parans{\int_{p}^\infty (1-F(x)) \, dx + \int_{0}^q G(x) \, dx}
        \\
        &\ge
        \Pro + \min(G(q),1-F(p)) \parans{\int_{p}^\infty (1-F(x))G(x) \, dx + \int_{0}^q G(x)(1-F(x)) \, dx}
        \\
        &=
        \Pro + \min(G(q),1-F(p)) \parans{\int_{0}^\infty(1-F(x))G(x) \, dx - \int_{q}^p (1-F(x))G(x) \, dx}.
    \end{align*}
    Finally, to complete the proof, we use the following equality proven by~\cite{mcafee2008gains}, which we reiterate here for completeness:
    \begin{align}
        \FB
        &= \int_{0}^\infty\int_0^v (v - c) \, dG(c) \, dF(v)\nonumber
        \\
        &= \int_{0}^\infty\parans{\bigg[(v-c)G(c)\bigg]_0^v + \int_{0}^v G(c) \, dc} \, dF(v)\nonumber
        \\
        &= \int_{0}^\infty \int_{0}^v G(c) \, dc \, dF(v)\nonumber
        \\
        &=
        \bigg[-(1-F(v))\int_0^v G(c)dc \bigg]_0^\infty + \int_0^\infty (1-F(v))G(v) \, dv\nonumber
        \\
        &=
        \int_{0}^\infty(1-F(v))G(v) \, dv,\label{eq:07060145}
    \end{align}
    \gpedit{where the limit converges to zero since $\displaystyle(1 - F(v))\int_0^v G(c) \, dc \le v(1 - F(v))$ and $\displaystyle\lim_{v \rightarrow \infty} v(1 - F(v)) = 0$ (see footnote above).}
    This finishes the proof.
\end{proof}


\paragraph{Optimal Prices: Symmetric Agents, MHR Distributions}
We here provide the proof of Theorem~\ref{thm:pp-optimal}.
As we discussed in Section~\ref{sec:summary-results}, the proof mainly follows from proving the existence of quantiles $\mu_F(\alpha)$ and $\mu_F(\beta)$ such that $1-\alpha$ and $\beta$ are absolute constants regardless of the distributions.

\begin{proof}[Proof of Theorem~\ref{thm:pp-optimal}]
Recall that for a cumulative distribution function $F$ and a constant $0 \le \alpha \le 1,$ we use $\mu_F(\alpha)$ to denote the $\alpha$-quantile of $F,$ \ie $\mu_F(\alpha) = \inf\{t : F(t) \ge \alpha\}.$ Now, 
let $0 \le \beta < \alpha \le 1$ be constants to be determined later. Furthermore, suppose that
\begin{align*}
    \int_{\mu_F(\beta)}^{\mu_F(\alpha)} (1-F(x))F(x) \, dx \ge \frac{1}{C}\FB
\end{align*}
for some constant $C \ge 1$ to be determined later.
Now, note that
\begin{align*}
    \GFT(p^*, q^*) 
    \ge \Pro(p^*, q^*) 
    &\ge \Pro(\mu_F(\alpha), \mu_F(\beta)) 
    \\
    &\ge (\mu_F(\alpha) - \mu_F(\beta))\beta(1 - \alpha),
\end{align*}
where $(p^*, q^*)$ are an arbitrary pair of prices that maximizes the expected profit of the broker.
Therefore, we have that
\begin{align*}
     \GFT(p^*, q^*) \ge (\mu_F(\alpha)-\mu_F(\beta))\beta(1 - \alpha) \ge \beta(1 - \alpha)\int_{\mu_F(\beta)}^{\mu_F(\alpha)} (1-F(x))F(x) \, dx \ge \frac{\beta(1 - \alpha)}{C}\FB.
\end{align*}
Thus, it remains to find constants $\alpha, \beta, C$ such that $\displaystyle\int_{\mu_F(\beta)}^{\mu_F(\alpha)}  (1-F(x))F(x) dx \ge \frac{1}{C} \FB$. It suffices to find constants $\alpha, \beta, C$ such that
\begin{align}
    (C - 1)\int_{\mu_F(\beta)}^{\mu_F(\alpha)}  (1-F(x))F(x) \, dx \ge \int_0^{\mu_F(\beta)} (1-F(x))F(x) \, dx + \int_{\mu_F(\alpha)}^\infty (1-F(x))F(x) \, dx,\label{eq:07050550}
\end{align}
since by~\eqref{eq:07060145}, we have that
\begin{align*}
    \FB 
    &= \int_{0}^\infty (1-F(x))F(x) \, dx
    \\ 
    &=
    \int_{0}^{\mu_{F}(\beta)} (1-F(x))F(x) + \int_{\mu_F(\beta)}^{\mu_F(\alpha)} (1-F(x))F(x) + \int_{\mu_F(\alpha)}^\infty (1-F(x))F(x).
\end{align*}

Now, since we are considering the symmetric setting with MHR distributions, we know that $(1-F(x))/f(x)$ is nonincreasing and that $F(x)/f(x)$ is nondecreasing. Hence, for the integral on the left-hand side of~\eqref{eq:07050550}, we have that
\begin{align*}
    \int_{\mu_F(\beta)}^{\mu_F(\alpha)} (1-F(x))F(x) \, dx 
    &\ge
    \int_{\mu_F(\beta)}^{\mu_F(\alpha)}  \frac{1-F(x)}{f(x)} f(x) F(x) \, dx  \ge \frac{1- \alpha}{f(\mu_F(\alpha))}\int_{\mu_F(\beta)}^{\mu_F(\alpha)} F(x) \, dF(x)
    \\
    \int_{\mu_F(\beta)}^{\mu_F(\alpha)}  (1-F(x))F(x) \, dx
    &\ge
    \int_{\mu_F(\beta)}^{\mu_F(\alpha)}  (1-F(x))\frac{F(x)}{f(x)}f(x) \, dx \ge \frac{\beta}{f(\mu_F(\beta))}\int_{\mu_F(\beta)}^{\mu_F(\alpha)} (1-F(x)) \, dF(x).
\end{align*}
Therefore, we have that
\begin{align}
    \int_{\mu_F(\beta)}^{\mu_F(\alpha)} (1-F(x))F(x) \, dx
    &\ge
    \frac{1}{2}\parans{\frac{1-\alpha}{f(\mu_F(\alpha))}\int_{\mu_F(\beta)}^{\mu_F(\alpha)} F(x) \, dF(x) + \frac{\beta}{f(\mu_F(\beta))}\int_{\mu_F(\beta)}^{\mu_F(\alpha)} (1-F(x)) \, dF(x)}\nonumber
    \\
    &=
    \frac{1}{2}\parans{\frac{1-\alpha}{f(\mu_F(\alpha))}\cdot\frac{\alpha^2 - \beta^2}{2}
    +
    \frac{\beta}{f(\mu_F(\beta))}\left(\alpha-\beta - \frac{\alpha^2-\beta^2}{2}\right)
    }\\
    &\ge \frac{1}{2}\min\left(\frac{\alpha^2 - \beta^2}{2}, \alpha-\beta - \frac{\alpha^2-\beta^2}{2}\right)\left(\frac{1-\alpha}{f(\mu_F(\alpha))} + \frac{\beta}{f(\mu_F(\beta))}\right).
    \label{eq:07050554}
\end{align}
Now, consider the right-hand side in~\eqref{eq:07050550}.
The first term can be upper bounded by
\begin{align}
    \int_0^{\mu_F(\beta)} (1-F(x))F(x) \, dx  
    &=
    \int_0^{\mu_F(\beta)} \frac{F(x)}{f(x)}(1-F(x)) f(x) \, dx \nonumber
    \\
    &\le \frac{\beta}{f(\mu_F(\beta))}\int_0^{\mu_F(\beta)} (1-F(x)) \, dF(x) \nonumber
    \\
    &=
    \frac{\beta}{f(\mu_F(\alpha))}\parans{\beta - \frac{\beta^2}{2}}.\label{eq:07050555}
\end{align}
Similarly, the second term can be bounded above by
\begin{align}
    \int_{\mu(\alpha)}^\infty (1-F(x))F(x) \, dx
    &= 
    \int_{\mu(\alpha)}^\infty \frac{1-F(x)}{f(x)}F(x) f(x) \, dx\nonumber
    \\
    &\le
    \frac{1-\alpha}{f(\mu_F(\alpha))}\int_{\mu(\alpha)}^1 F(x) \, dF(x)\nonumber
    \\
    &=
    \frac{1-\alpha}{f(\mu_F(\alpha))}\parans{\frac{1 - \alpha^2}{2}}.\label{eq:07050556}
\end{align}
Thus, by~\eqref{eq:07050555} and~\eqref{eq:07050556}, the right-hand side in~\eqref{eq:07050550} can be upper-bounded by
\begin{align}
    \int_0^{\mu_F(\beta)} (1-F(x))F(x) \, dx + \int_{\mu(\alpha)}^\infty (1-F(x))F(x) \, dx &\le \max\left(\beta - \frac{\beta^2}{2}, \frac{1 - \alpha^2}{2}\right)\left(\frac{1 - \alpha}{f(\mu_F(\alpha))} + \frac{\beta}{f(\mu_F(\beta))}\right)\label{eq:07050314}.
\end{align}
Hence, from~\eqref{eq:07050550}, we see that it suffices by~\eqref{eq:07050554} and~\eqref{eq:07050314} to find constants $\alpha, \beta, C$ such that
\begin{align*}
    \frac{C - 1}{2}\min\left(\frac{\alpha^2 - \beta^2}{2}, \alpha - \beta - \frac{\alpha^2 - \beta^2}{2}\right) \ge \max\left(\beta - \frac{\beta^2}{2}, \frac{1 - \alpha^2}{2}\right).
\end{align*}
By (roughly) numerically optimizing the parameters $\alpha, \beta$ and $C$ here, taking $\alpha = 2 / 3, \beta = 1 / 3,$ and $C = 4,$ we see that 
\begin{align*}
    \GFT(p^*, q^*) \ge \frac{\beta(1 - \alpha)}{C}\FB = \frac{1}{36}\FB.
\end{align*}
\end{proof}

\paragraph{Optimal Prices: Bounding via Hazard Rates}
We provide the proof of Theorem~\ref{thm:pp-mhr-bounded} here.
The proof starts with the observation that $\FB$ can be written in terms of the integral of $|v-c|$.
Then, noting that the optimal broker's profit maximizes $(p-q)(1-F(p))G(q)$, we lower bound the profit by an averaging argument over a desirable region.
Combining this with the bounded hazard rates completes the proof.
For simplicity, we restrict the support of the distributions here to be $[0,1]$ without loss of generality, \eg see Remark 2 in~\cite{deng2022approximately}.


\begin{proof}[Proof of Theorem~\ref{thm:pp-mhr-bounded}]
    Note first that
    \begin{align*}
        \int_0^1 \int_v^1 |v - c| \, dF(c) \, dF(v) &= \int_0^1 \int_0^c |v - c| \, dF(v) \, dF(c) \\
        &= \int_0^1 \int_0^v |c - v| \, dF(c) \, dF(v) \\
        &= \int_0^1 \int_0^v |v - c| \, dF(c) \, dF(v)\\
        &= \FB.
    \end{align*}
    Thus, we have that
    \begin{align*}
        \FB &= \frac{1}{2}\int_0^1 \int_0^1 |v - c| \, dF(c) \, dF(v).
    \end{align*}
    On the other hand, since the broker uses an optimal pair of prices $(p^*, q^*)$, we have that
    \begin{align*}
        \Pro = (p^* - q^*)(1-F(p^*))F(q^*) \ge (p-q)(1-F(p))F(q),
    \end{align*}
    for any $0 \le q \le p \le 1$.
    Integrating both sides over the region of $\R_{\ge 0}^2$ that satisfies $p \ge q$, we obtain
    \begin{align}
        \frac{\Pro}{2}
        &\ge \int_0^1 \int_{q}^1 (p-q)(1-F(p))F(q)\,dp\,dq \nonumber
        \\
        &= \int_0^1 \int_{q}^1 (p-q)\frac{1-F(p)}{f(p)}\frac{F(q)}{f(q)}f(p)f(q) \,dp\,dq.\nonumber
        \\
        &\ge
        \int_0^1 \int_{q}^1 (p-q)\frac{1}{M^2}f(p)f(q) \,dp\, dq 
        \\
        &=\frac{1}{M^2} \int_0^1 \int_{q}^1 (p-q)f(p)f(q) \,dp\,dq \nonumber
        \\
        &=
        \frac{1}{M^2}\int_0^1 \int_{q}^1 (p-q) \, dF(p) \, dF(q),\label{eq:07012356}
    \end{align}
    where in the last inequality, we use the fact that both hazard rates are uniformly bounded above by $M$ and thus their inverses are uniformly bounded below by $1/M$.

    Analogously, by expanding the integral from the seller's side first, we have
    \begin{align*}
        \frac{\Pro}{2}
        &\ge \int_{0}^1 \int_{0}^p (p-q)(1-F(p))F(q) \,dq \,dp
        \\
        &= \int_0^1 \int_{0}^p (p-q)\frac{1-F(p)}{f(p)}\frac{F(q)}{f(q)}f(p)f(q) \,dq\, dp
        \\
        &\ge
        \frac{1}{M^2}\int_0^1 \int_{0}^p (p-q) \, dF(q) \, dF(p)
        \\
        &=
        \frac{1}{M^2}\int_0^1 \int_{0}^q (q-p) \, dF(p) \, dF(q).
    \end{align*}
    Combining this with~\eqref{eq:07012356}, we obtain
    \begin{align*}
        \Pro 
        &\ge \frac{1}{M^2}\int_{0}^1\int_{0}^1 |p-q| \, dF(p)dF(q)
        \\
        &= \frac{2}{M^2} \FB,
    \end{align*}
    and the proof follows from the fact that $\GFT \ge \Pro$.
    
\end{proof}

It is worth noting that the approximation factor here cannot be better than $1/2$, since we must have that $M \ge 2$ according to the problem setup.
To see why, observe that
\begin{align*}
    \frac{1-F(x)}{f(x)} \ge \frac{1}{M}, \frac{F(x)}{f(x)} \ge \frac{1}{M}
\end{align*}
implies that $M \ge 2$ as follows:
\begin{align*}
    \frac{1}{M} \le \frac{F(x)}{f(x)} = 1 - \frac{1 - F(x)}{f(x)} \le 1-\frac{1}{M}.
\end{align*}
\paragraph{Quantile-based Prices}
As mentioned in Section~\ref{sec:summary-results}, the decomposition in Theorem~\ref{thm:pp-structure} cannot be directly applied to obtain a constant approximation factor for the first-best GFT in general, as it depends on an integral involving the buyer's and seller's CDFs in the interval between the two posted prices.
This brings us to the question of whether we can obtain an approximation factor purely characterized by the quantile probabilities at the offered prices, as suggested by Theorem~\ref{thm:pp-quantile}.

We formally prove Theorem~\ref{thm:pp-quantile} in what follows:

\begin{proof}[Proof of Theorem~\ref{thm:pp-quantile}]
    For any $0 \le c \le \mu_G(\beta) \le \mu_F(\alpha),$
\begin{align}
    \int_c^\infty (v - c) \, dF(v) &= \int_c^{\mu_F(\alpha)} (v - c) \, dF(v) + \int_{\mu_F(\alpha)}^\infty (v - c) \, dF(v) \nonumber\\
    &\le \frac{\alpha}{1 - \alpha}\int_{\mu_F(\alpha)}^\infty (v - c) \, dF(v) + \int_{\mu_F(\alpha)}^\infty (v - c) \, dF(v) \nonumber\\
    &= \frac{1}{1 - \alpha}\int_{\mu_F(\alpha)}^\infty (v - c) \, dF(v) \label{eq:07051208},
\end{align}
where we used the fact that $$\int_c^{\mu_F(\alpha)} (v - c) \, dF(v) \le \frac{\alpha}{1 - \alpha}\int_{\mu_F(\alpha)}^\infty (v - c) \, dF(v).$$ Observe that this holds because (1) the probability mass of the first integral is at most $\alpha$ while the probability mass of the second integral is exactly $1-\alpha$ and (2) the value of $v - c$ at any point in the probability mass of the second integral is always as least as much as the value of $v - c$ at any point in the probability mass of the first integral.\ In fact, reusing this idea, we have that
\begin{align}
    \int_{\mu_G(\beta)}^\infty \int_c^\infty (v - c) \, dF(v) \, dG(c) &\le \frac{1 - \beta}{\beta}\int_0^{\mu_G(\beta)} \int_c^\infty (v - c) \, dF(v) \, dG(c) \nonumber\\
    \int_0^\infty \int_c^\infty (v - c) \, dF(v) \, dG(c) &\le \frac{1}{\beta}\int_0^{\mu_G(\beta)} \int_c^\infty (v - c) \, dF(v) \, dG(c) \label{eq:07051209}.
\end{align}
Finally, combining~\eqref{eq:07051208} and~\eqref{eq:07051209}, we obtain $$\int_0^{\mu_G(\beta)} \int_{\mu_F(\alpha)}^\infty (v - c) \, dF(v) \, dG(c) \ge \beta(1 - \alpha)\int_0^\infty \int_c^\infty (v - c) \, dF(v) \, dG(c),$$ as desired.
\end{proof}
\ssedit{

\begin{proof}[Proof of Corollary~\ref{cor:single-sample}]
    Note that drawing samples $p \sim F$ and $q \sim G$ is equivalent to sampling quantiles $\alpha$ and $\beta$ of $F$ and $G,$ respectively, uniformly at random over $[0,1]$. Thus, letting $\GFT(x,y)$ denote the gains-from-trade of the posted-pricing mechanism $(x,y),$ we have that
    \begin{align*}
        &\Exu{p \sim F, q \sim G}{\GFT(\max(p, q), \min(p, q))}\\
        &\quad= \int_0^1 \int_{\beta}^1 \GFT(\mu_F(\alpha), \mu_G(\beta)) \, d\alpha \, d\beta + \int_0^1 \int_0^{\beta} \GFT(\mu_G(\beta), \mu_F(\alpha)) \, d\alpha \, d\beta\\
        & \quad \ge \int_0^1 \int_{\beta}^1 \beta(1 - \alpha)\FB \, d\alpha \, d\beta + \int_0^1 \int_0^{\beta} \alpha(1 - \beta)\FB \, d\alpha \, d\beta = \frac{1}{12}\FB,
    \end{align*}
    where the inequality holds by Theorem~\ref{thm:pp-quantile}, since $F = G.$
\end{proof}
We conclude by remarking that having access to more samples may not necessarily increase the approximation factor to the first-best gains-from-trade, since it may be more optimal for the better-informed broker to set prices that further restrict trade.
}

\section{Optimal Mechanisms}
We now consider the case where the broker runs any general BNIC and IR mechanism that maximizes her profit.
We assume here that the supports of $F$ and $G$ are $[a_1,b_1]$ and $[a_2,b_2]$, respectively, \ie both are bounded, following~\cite{myerson1983efficient}.\footnote{Indeed, if we consider $F$ and $G$ with unbounded supports, we observe that the characterization of~\cite{myerson1983efficient} of an optimal BNIC and IR mechanism does not carry over, which is elaborated on more in the proof of Theorem~\ref{thm:inapx-general}.}
By Theorem~\ref{thm:myerson}, the broker's optimal mechanism exhibits the following GFT, assuming regular distributions:
\begin{align*}
    \GFT_{F, G} &= \int_{a_2}^{b_2} \int_{a_1}^{b_1} (v - c)\mathbf{1}\left\{\phi_F(v) \ge \phi_G(c)\right\} \, dF(v) \, dG(c)\\  
    &= \int_{a_2}^{b_2} \int_{a_1}^{b_1} (v - c)\mathbf{1}\left\{v - c \ge \frac{1 - F(v)}{F'(v)} + \frac{G(c)}{G'(c)}\right\} \, dF(v) \, dG(c).
\end{align*}
As usual, the first-best GFT is given by 
\begin{align*}
    \FB_{F, G} =\int_{a_2}^{b_2} \int_{a_1}^{b_1} (v - c)\mathbf{1}\left\{v - c \ge 0\right\} \, dF(v) \, dG(c).
\end{align*}


\paragraph{Uniform distributions}
To prove Theorem~\ref{thm:uniform}, we require a technical lemma that will simplify our analysis. The lemma states that the factor to which the broker's optimal mechanism approximates the first-best GFT is invariant to stretches and shifts in $F$ and $G,$ provided that we perform the same stretches and shifts to $F$ and $G.$ 
\begin{lemma}\label{lem:normalization}
    Let $F$ and $G$ be regular distributions supported on $[a_1, b_1]$ and $[a_2, b_2],$ respectively. Define $F^*(v) = F((v - k_1) / k_2)$ and $G^*(c) = G((c - k_1) / k_2)$ for $k_1 \in \mathbb{R}$ and $k_2 \in \mathbb{R}_{> 0}.$ If $\GFT_{F, G} = \alpha\FB_{F, G},$ then $\GFT_{F^*, G^*} = \alpha\FB_{F^*, G^*}.$
\end{lemma}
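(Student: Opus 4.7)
The plan is to reduce the computation of $\GFT_{F^*,G^*}$ and $\FB_{F^*,G^*}$ to the corresponding quantities for $F$ and $G$ via a change of variables, showing that both scale by the same factor $k_2$, so their ratio is preserved.

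First, I would record how the densities and virtual valuations transform. Since $F^*(v) = F((v-k_1)/k_2)$, differentiation gives $f^*(v) = \tfrac{1}{k_2} f((v-k_1)/k_2)$, and similarly $g^*(c) = \tfrac{1}{k_2} g((c-k_1)/k_2)$. Setting $v' = (v-k_1)/k_2$, a direct calculation yields
\begin{align*}
    \phi_{F^*}(v) = v - \frac{1-F^*(v)}{f^*(v)} = k_2 v' + k_1 - k_2\cdot\frac{1-F(v')}{f(v')} = k_2\,\phi_F(v') + k_1,
\end{align*}
and an analogous identity $\phi_{G^*}(c) = k_2\,\phi_G(c') + k_1$ where $c' = (c-k_1)/k_2$. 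In particular, since $k_2 > 0$, monotonicity of the virtual valuations is preserved, so $F^*$ and $G^*$ are regular and Theorem~\ref{thm:myerson} applies to them. Moreover, the event $\{\phi_{F^*}(v) \ge \phi_{G^*}(c)\}$ is exactly the event $\{\phi_F(v') \ge \phi_G(c')\}$.

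Next I would perform the change of variables $v = k_2 v' + k_1$, $c = k_2 c' + k_1$ in the integral expression for $\GFT_{F^*,G^*}$. The supports $[k_2 a_1 + k_1, k_2 b_1 + k_1]$ and $[k_2 a_2 + k_1, k_2 b_2 + k_1]$ of $F^*$ and $G^*$ map back to $[a_1,b_1]$ and $[a_2,b_2]$, the differentials satisfy $dF^*(v) = dF(v')$ and $dG^*(c) = dG(c')$, and the integrand factor $(v-c)$ becomes $k_2(v'-c')$. Therefore
\begin{align*}
    \GFT_{F^*,G^*}
    &= \int_{a_2}^{b_2}\!\int_{a_1}^{b_1} k_2(v'-c')\,\mathbf{1}\{\phi_F(v') \ge \phi_G(c')\}\,dF(v')\,dG(c')
    = k_2\,\GFT_{F,G}.
\end{align*}
The same change of variables applied to the first-best expression, using $\mathbf{1}\{v \ge c\} = \mathbf{1}\{v' \ge c'\}$, gives $\FB_{F^*,G^*} = k_2\,\FB_{F,G}$. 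Dividing, the $k_2$'s cancel and we obtain $\GFT_{F^*,G^*} = \alpha\,\FB_{F^*,G^*}$ whenever $\GFT_{F,G} = \alpha\,\FB_{F,G}$.

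I do not expect any real obstacle here: the lemma is essentially an affine-invariance statement for Myerson's virtual-value machinery. The only point that requires a bit of care is verifying that the virtual valuations transform by the same affine map (so the allocation rule is invariant) and that the Jacobian of the change of variables exactly cancels the $1/k_2$ in the densities, leaving a single overall factor of $k_2$ in both numerator and denominator.
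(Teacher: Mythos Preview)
Your proposal is correct and follows essentially the same approach as the paper: verify that the virtual valuations transform affinely as $\phi_{F^*}(v) = k_2\,\phi_F((v-k_1)/k_2) + k_1$ (so regularity is preserved and the allocation event is unchanged), then perform the affine change of variables to show both $\GFT$ and $\FB$ scale by $k_2$. The only cosmetic difference is that the paper substitutes in the opposite direction (from $\GFT_{F,G}$ to $\GFT_{F^*,G^*}$), arriving at $\GFT_{F,G} = \tfrac{1}{k_2}\GFT_{F^*,G^*}$, which is of course equivalent.
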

\begin{proof}
We first prove that $F^*$ and $G^*$ are regular. Note that to show that $\phi_{F^*}(v^*)$ is nondecreasing on its support $[k_2a_1 + k_1, k_2b_1 + k_1],$ it suffices to show that $\phi_{F^*}(k_2v + k_1)$ is nondecreasing on $[a_1, b_1].$ Indeed, since
\begin{align*}
    \phi_{F^*}(k_2v + k_1) &= (k_2v + k_1) - \frac{1 - F^*(k_2v + k_1)}{(F^*)'(k_2v + k_1)}\\
    &= (k_2v + k_1) - \frac{1 - F(v)}{F'(v) / k_2}\\
    &= k_1 + k_2\phi_F(v)
\end{align*}
and $\phi_F(v)$ is nondecreasing, $F^*$ is regular. A similar argument will show that $G^*$ is regular.

Now, we have that
\begin{align*}
    \GFT_{F, G} &= \int_{a_2}^{b_2} \int_{a_1}^{b_1} (v - c)\mathbf{1}\left\{v - c \ge \frac{1 - F(v)}{F'(v)} + \frac{G(c)}{G'(c)}\right\} F'(v) \, dv \; G'(c) \, dc\\
    &= \int_{a^*_2}^{b^*_2} \int_{a^*_1}^{b^*_1} \frac{v^* - c^*}{k_2}\mathbf{1}\left\{\frac{v^* - c^*}{k_2} \ge \frac{1 - F(\frac{v^* - k_1}{k_2})}{F'(\frac{v^* - k_1}{k_2})} + \frac{G(\frac{c^* - k_1}{k_2})}{G'(\frac{c^* - k_1}{k_2})}\right\}\\
    &\quad\cdot \frac{1}{k_2}F'\left(\frac{v^* - k_1}{k_2}\right) \, dv^* \; \frac{1}{k_2} \; G'\left(\frac{c^* - k_1}{k_2}\right) \, dc^*,
\end{align*}
where we use the change of variables with $v^* = k_2v + k_1$ and $c^* = k_2c + k_1$.
This can further be expanded as
\begin{align*}
    \GFT_{F,G} 
    &= \int_{a^*_2}^{b^*_2} \int_{a^*_1}^{b^*_1} \frac{v^* - c^*}{k_2}\mathbf{1}\left\{\frac{v^* - c^*}{k_2} \ge \frac{1 - F^*(v^*)}{k_2(F^*)'(v^*)} + \frac{G^*(c^*)}{k_2(G^*)'(c^*)}\right\} (F^*)'(v^*) \, dv^* \; (G^*)'(c^*) \, dc^*\\
    &= \int_{a^*_2}^{b^*_2} \int_{a^*_1}^{b^*_1} \frac{v^* - c^*}{k_2}\mathbf{1}\left\{v^* - c^* \ge \frac{1 - F^*(v^*)}{(F^*)'(v^*)} + \frac{G^*(c^*)}{(G^*)'(c^*)}\right\} (F^*)'(v^*) \, dv^* \; (G^*)'(c^*) \, dc^*\\
    &= \frac{1}{k_2}\GFT_{F^*, G^*},
\end{align*}
where the last equality follows from $F^*$ and $G^*$ being regular distributions.

A similar argument will show that $$\FB_{F, G} = \frac{1}{k_2}\FB_{F^*, G^*}.$$ Thus, if $\GFT_{F, G} = \alpha\FB_{F, G},$ then $$\GFT_{F^*, G^*} = k_2\GFT_{F, G} = \alpha k_2\FB_{F, G} = \alpha\FB_{F^*, G^*}.$$
\end{proof}

The proof of Theorem~\ref{thm:uniform} is deferred to Appendix~\ref{proof:uniform}, as it requires much case work that includes cumbersome algebraic manipulations.


\paragraph{Inapproximability}
We provide the proof of Theorem~\ref{thm:inapx-general} here, which states that the extent to which the broker's optimal strategic behavior degrades the GFT can be significant.
\begin{proof}[Proof of Theorem~\ref{thm:inapx-general}]
    Recall that by Theorem~\ref{thm:myerson}, the gains-from-trade for any BNIC and IR mechanism that maximizes the broker's expected profit, assuming regular distributions, is given by 
    \begin{align*}
        \mathsf{GFT} &= \iint (v - c) \mathbf{1}\left\{v - c \ge \frac{1 - F(v)}{F'(v)} + \frac{G(c)}{G'(c)}\right\} \, dF(v) \, dG(c).
    \end{align*}
   Thus, to construct $F$ and $G$ for which such a mechanism achieves a $0$-approximation to the gains-from-trade, we want to construct $F$ such that 
    \begin{align*}
        v = \frac{1 - F(v)}{F'(v)},
    \end{align*}
    for all $v,$ in which case we have that
    \begin{align*}
        v - c < \frac{1 - F(v)}{F'(v)} + \frac{G(c)}{G'(c)}
    \end{align*}
    for all $v$ and $c.$ Solving the above differential equation, we obtain the general solution 
    \begin{align*}
        F(v) = -\frac{a}{v} + 1
    \end{align*}
    on $[a, \infty)$ for $a > 0.$ However, since the result about the optimal broker's mechanism by~\cite{myerson1983efficient}  only applies to distributions with bounded supports, we now modify $F$ by first truncating it and then extending it using a tangent line.\footnote{Indeed, with the original $F,$ trade never occurs by construction, so $\Pro = 0.$ This implies that the characterization by~\cite{myerson1983efficient} does not apply as the broker can simply use a posted-pricing mechanism to obtain a positive profit.}
    In particular, we define
    $$F(v) = 
    \begin{cases} 
        \displaystyle-\frac{a}{v} + 1 &\text{if $a \le v \le b$}\vspace{6pt}\\
        \displaystyle\left(-\frac{a}{b} + 1\right) + \frac{a}{b^2}(v - b) &\text{if $b < v \le 2b$}
    \end{cases}$$
    for $b > a \ge 1.$ Note that $F$ is regular, since
    \begin{align*}
        \phi_F(v) &= \begin{cases} 
        0 &\text{if $a \le v \le b$}\\
        2v - 2b &\text{if $b < v \le 2b$}
    \end{cases}
    \end{align*}
    is increasing. Finally, let $G$ be the uniform (and thus regular) distribution on $[0, 1].$
    
    We will now show that for any $\epsilon > 0,$ there exists $b > a$ such that any BNIC and IR mechanism that maximizes the broker's expected profit induces $\mathsf{GFT} < \epsilon\FB.$

    By the construction of $F,$ we have that $\displaystyle v - c < \frac{1 - F(v)}{F'(v)} + \frac{G(c)}{G'(c)}$ for all $v \in (a, b),$ so
    \begin{align*}
        \GFT &= \frac{a}{b^2}\int_b^{2b} \int_0^1 (v - c)\mathbf{1}\left\{v - c \ge \frac{1 - F(v)}{F'(v)} + \frac{G(c)}{G'(c)}\right\} \, dc \, dv\\
        &\le \frac{a}{b^2}\int_b^{2b} \int_0^1 (v - c) \, dc \, dv\\
        &= \frac{a}{b^2}\int_b^{2b} \left(v - \frac{1}{2}\right) \, dv\\
        &= \frac{a}{b^2}\left(\frac{3b^2}{2} - \frac{b}{2}\right)\\
        &= \frac{a(3b - 1)}{2b}.
    \end{align*}
    In addition, by the above derivation,
    \begin{align*}
        \FB &= a\int_a^b \int_0^1 (v - c) \, dc \, \frac{1}{v^2} \, dv + \frac{a}{b^2}\int_b^{2b} \int_0^{1} (v - c) \, dc \, dv\\
        &= a\int_a^b \left[vc - \frac{c^2}{2}\right]_0^1 \frac{1}{v^2} dv + \frac{a(3b - 1)}{2b}\\
        &= a\int_a^b \left(v - \frac{1}{2}\right)\frac{1}{v^2} \, dv + \frac{a(3b - 1)}{2b}\\
        &= a\left(\ln\left(\frac{b}{a}\right) + \frac{1}{2b} - \frac{1}{2a}\right) + \frac{a(3b - 1)}{2b}.
    \end{align*}
    Hence,
    \begin{align*}
        \frac{\GFT}{\FB} &\le \frac{a(3b - 1)}{2b} \bigg / \left(a\left(\ln\left(\frac{b}{a}\right) + \frac{1}{2b} - \frac{1}{2a}\right) + \frac{a(3b - 1)}{2b}\right),
    \end{align*}
    which approaches $0$ as $b \rightarrow \infty,$ so $\displaystyle\frac{\GFT}{\FB}$ approaches $0$ as $b \rightarrow \infty.$
\end{proof}

\paragraph{Inapproximability with a Public Agent}
Here, we formally prove Theorem~\ref{thm:inapx-public}, first for the public-seller setting and then for the public-buyer setting.
To prove the theorem for the public-seller setting, we first introduce some notation:
\begin{definition}
    Given a direct mechanism $\Mec = (x(v), p(v))$\footnote{In the public-seller setting, the allocation function $x$ and payment function $p$ of $\Mec$ only depend on the buyer's reported valuation $v.$}, we define the following functions for the expected utilities of the buyer, intermediary, and seller, respectively, assuming truthful reporting by the buyer:
    \begin{itemize}
        \item $\displaystyle u_b(v) = x(v) \cdot v - p_b(v)$ 
        \item $\displaystyle u_i = \int_a^b (p_b(v) - p_s(v)) \, F'(v) \, dv$
        \item $\displaystyle u_s = \int_a^b (p_s(v) - x(v) \cdot c) \, F'(v) \, dv$
    \end{itemize}
\end{definition}
Note that with a public seller, $\Mec$ is BNIC if $u_b(v) \ge x(\tilde{v}) \cdot v - p_b(\tilde{v})$ for every $v$ and $\tilde{v},$ and $\Mec$ is IR if $u_b(v) \ge 0$ for all $v$ and $u_s \ge 0.$


The following theorem is an analogue of Theorem 3 in~\cite{myerson1983efficient}:
\begin{theorem}
\label{singleton-seller-ic}
For any BNIC mechanism $\Mec$ in the public-seller setting, $u_b(v)$ is increasing and $$u_i + u_b(a) + u_s = \int_a^b \left(\left(v - \frac{1 - F(v)}{F'(v)}\right) - c\right) x(v) \, F'(v) \, dv.$$
\end{theorem}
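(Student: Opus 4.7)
The plan is to mirror the classical Myerson--Satterthwaite derivation (their Theorem~3), specialized to the public-seller setting in which the allocation and payment rules depend only on the buyer's report.

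First, I would establish monotonicity of $u_b$. Applying the BNIC inequality with types $v$ and $\tilde v$ in both orders yields the envelope sandwich $x(\tilde v)(v - \tilde v) \le u_b(v) - u_b(\tilde v) \le x(v)(v - \tilde v)$. Since $x \ge 0$, this shows that $u_b$ is nondecreasing. The same sandwich makes $u_b$ absolutely continuous with $u_b'(v) = x(v)$ almost everywhere, so $u_b(v) = u_b(a) + \int_a^v x(t)\,dt$ by the standard Myerson lemma argument applied in single-agent form.

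Next, I would compute $u_i + u_s$ directly from definitions. The seller's payment $p_s(v)$ appears with opposite signs in $u_i$ and $u_s$, so the two terms cancel and
$$u_i + u_s = \int_a^b \bigl(p_b(v) - x(v)\,c\bigr)\,F'(v)\,dv.$$
Substituting $p_b(v) = x(v)\,v - u_b(v)$ rewrites this as $\int_a^b (v-c)\,x(v)\,F'(v)\,dv - \int_a^b u_b(v)\,F'(v)\,dv$. I then integrate the last term by parts against the antiderivative $F(v) - 1$ (chosen so that the boundary term vanishes at $v = b$ because $F(b) = 1$); together with $u_b'(v) = x(v)$ and $F(a) = 0$, this gives
$$\int_a^b u_b(v)\,F'(v)\,dv \;=\; u_b(a) + \int_a^b x(v)\bigl(1 - F(v)\bigr)\,dv.$$

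Substituting back and rearranging yields
$$u_i + u_b(a) + u_s \;=\; \int_a^b (v-c)\,x(v)\,F'(v)\,dv \;-\; \int_a^b x(v)\bigl(1 - F(v)\bigr)\,dv,$$
which is precisely the claimed identity once $F'(v)$ is factored back into the bracket. The only delicate point is passing from the envelope sandwich to the pointwise-a.e.\ identity $u_b'(v) = x(v)$ used to rewrite the boundary integral, but this is the standard consequence of absolute continuity of $u_b$ and suffices for the integration-by-parts step, so I do not anticipate a real technical obstacle.
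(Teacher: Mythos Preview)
Your proposal is correct and follows essentially the same approach as the paper: both derive the envelope sandwich from BNIC to get $u_b(v)=u_b(a)+\int_a^v x(t)\,dt$, then evaluate $\int_a^b u_b(v)\,F'(v)\,dv$ to obtain the $(1-F(v))$ term. The only cosmetic difference is that you use integration by parts against $F(v)-1$ where the paper substitutes the integral formula for $u_b$ and swaps the order of integration, but these are equivalent computations.
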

\begin{proof}
Since $\Mec$ is BNIC, for every $v$ and $v',$ $$u_b(v) = x(v) \cdot v - p_b(v) \ge x(v') \cdot v - p_b(v')$$ and $$u_b(v') = x(v') \cdot v' - p_b(v') \ge x(v) \cdot v' - p_b(v).$$ Thus, $$(v - v')x(v) \ge u_b(v) - u_b(v') \ge (v - v')x(v').$$ In particular, if $v > v',$ then $x(v) \ge x(v'),$ so $x(v)$ is increasing and hence integrable. Therefore, $u'_b(v) = x(v)$ almost everywhere, so 
\begin{equation}
u_b(v) = u_b(a) + \int_a^v x(t) \, dt\label{eq:u_b_integral}
\end{equation}
and $u_b(\cdot)$ is increasing.

Now, by~\eqref{eq:u_b_integral}, we have that
\begin{align*}
    \int_a^b (v - c)x(v) \, F'(v) \, dv - u_i &= \int_a^b (vx(v) - p_b(v)) \, F'(v) \, dv + \int_a^b (p_s(v) - cx(v)) \, F'(v) \, dv\\
    &= \int_a^b u_b(v)F'(v) \, dv + u_s\\
    &= \int_a^b \left(u_b(a) + \int_a^v x(t) \, dt\right)F'(v) \, dv + u_s \\
    &= u_b(a) + \int_a^b \int_a^v x(t) \, dt \, F'(v) \, dv + u_s\\
    &= u_b(a) + \int_a^b \int_t^b  F'(v) \, dv \, x(t) \, dt + u_s\\
    &= u_b(a) + \int_a^b (1 - F(t)) \, x(t) \, dt + u_s.
\end{align*}
Finally, from the first and last expressions in this chain of equations, we obtain $$u_i + u_b(a) + u_s = \int_a^b \left(\left(v - \frac{1 - F(v)}{F'(v)}\right) - c\right) x(v) \, F'(v) \, dv.$$
\end{proof}

Next, the following is an analogue of Theorem 4 in~\cite{myerson1983efficient}, tailored to the public-seller setting.
\begin{theorem}\label{thm:publicseller-char}
    Suppose $F$ is a regular distribution in the public-seller setting. Then, among all BNIC and IR mechanisms, the broker's expected profit is maximized by a mechanism in which the object is transferred to the buyer if and only if $\phi_F(v) \ge c$.
\end{theorem}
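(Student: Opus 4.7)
The plan is to combine the identity from Theorem~\ref{singleton-seller-ic} with a pointwise-maximization argument, mirroring the derivation of Myerson's optimal auction. Rearranging that identity gives
\begin{align*}
    u_i = \int_a^b \parans{\phi_F(v) - c}\, x(v)\, F'(v) \, dv - u_b(a) - u_s,
\end{align*}
and IR yields $u_b(a) \ge 0$ (since $u_b$ is nondecreasing by Theorem~\ref{singleton-seller-ic}, so $u_b(a) = \inf_v u_b(v)$) as well as $u_s \ge 0$. Hence $u_i \le \int_a^b (\phi_F(v) - c)\, x(v)\, F'(v) \, dv$, which in turn is maximized pointwise over $x(v) \in [0,1]$ by the choice $x^*(v) = \Ind{\phi_F(v) \ge c}$.

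The next step is to observe that $x^*$ is nondecreasing because $F$ is regular, so $\phi_F$ is nondecreasing; combined with the monotonicity half of Theorem~\ref{singleton-seller-ic}, this tells us that $x^*$ is implementable by some BNIC mechanism. I would then exhibit explicit payments that realize $x^*$ while simultaneously making $u_b(a) = 0$ and $u_s = 0$, so that the upper bound is attained. The standard Myerson buyer payment $p_b(v) = v \cdot x^*(v) - \int_a^v x^*(t) \, dt$ yields $u_b(v) = \int_a^v x^*(t) \, dt$, which is nondecreasing with $u_b(a) = 0$, and a straightforward check confirms BNIC and buyer IR. Setting $p_s(v) = c \cdot x^*(v)$ (paying the publicly-valued seller his cost $c$ on each trade) makes $u_s = 0$, which is IR on the seller side.

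Substituting $x^*$, $u_b(a) = 0$, and $u_s = 0$ back into the identity shows that this mechanism attains the upper bound $\int_a^b \max(\phi_F(v) - c,\, 0)\, F'(v) \, dv$, hence is optimal, and by construction it transfers the object if and only if $\phi_F(v) \ge c$, as claimed. The main and essentially only obstacle is the monotonicity of $x^*$, which uses regularity of $F$ in a genuine way; without regularity the pointwise maximizer could fail to be nondecreasing, and one would need an ironing argument in the style of~\cite{myerson1981optimal}. With regularity this obstruction disappears and the remainder is bookkeeping.
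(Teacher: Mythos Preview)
Your proposal is correct and follows essentially the same approach as the paper: rearrange the identity from Theorem~\ref{singleton-seller-ic}, use IR to bound $u_b(a)$ and $u_s$ below by zero, maximize the remaining integral pointwise via $x^*(v)=\Ind{\phi_F(v)\ge c}$ (monotone by regularity), and then exhibit payments achieving $u_b(a)=u_s=0$. The only cosmetic difference is that the paper writes the buyer payment in threshold form $p_b(v)=x^*(v)\cdot\inf\{t\ge a:\phi_F(t)\ge c\}$ rather than your Myerson integral form, but for this step allocation the two expressions coincide.
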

\begin{proof}
By Theorem \ref{singleton-seller-ic}, we have that $$u_i = \int_a^b \left(\left(v - \frac{1 - F(v)}{F'(v)}\right) - c\right) x(v) \, F'(v) \, dv - u_b(a) - u_s.$$ 
Since we want an IR mechanism $\Mec$ that maximizes $u_i,$ we want $\Mec$ to satisfy the following constraints:
\begin{enumerate}
    \item $u_b(a) = 0$
    \item $u_s = 0$
    \item $x(v) = \begin{cases}
        1 & \text{if $\phi_F(v) \ge c$}\\
        0 & \text{otherwise}
    \end{cases}$
\end{enumerate}
It remains to define $p(v) = (p_s(v), p_b(v))$ such that first two constraints are satisfied. We take $$p_s(v) = x(v)c$$ and $$p_b(v) = x(v) \cdot \inf \{t \ge a | \phi_F(t) \ge c\}.$$ Then, since $p_b(a) = ax(a)$ and $p_s(v) - x(v)c = 0$ for all $v,$ we have that $u_b(a) = 0$ and $u_s = 0,$ as desired.

Note that $\Mec$ is BNIC, since if trade occurs, the buyer pays the lowest valuation he could have reported and still have the trade occur. Then, since $u_b(v)$ is increasing by Theorem \ref{singleton-seller-ic} and $u_b(a) = 0,$ we have that $\Mec$ is IR.
\end{proof}

The next corollary directly follows from the above theorem:
\begin{corollary}
\label{singleton-seller-gft}
    Suppose $F$ is a regular distribution in the public-seller setting. Then, the gains-from-trade from any BNIC and IR mechanism that maximizes the broker's expected profit is $$\int_a^b (v - c) \mathbf{1}\left\{v - c \ge \frac{1 - F(v)}{F'(v)}\right\}\, F'(v) \, dv.$$
\end{corollary}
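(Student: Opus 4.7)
The plan is to derive this as an immediate consequence of Theorem~\ref{thm:publicseller-char}, which already pins down the allocation rule of an optimal BNIC and IR broker's mechanism in the public-seller setting. First, I would invoke Theorem~\ref{thm:publicseller-char} to conclude that under any such optimal mechanism, the allocation satisfies $x(v) = \mathbf{1}\{\phi_F(v) \ge c\}$, where $c$ is the (deterministic) seller's value.

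Next, I would rewrite the threshold condition by expanding the virtual valuation: since $\phi_F(v) = v - (1-F(v))/F'(v)$, the inequality $\phi_F(v) \ge c$ is equivalent to $v - c \ge (1 - F(v))/F'(v)$. Substituting this indicator into the definition of gains-from-trade and using the fact that $c$ is a constant (so that the outer expectation over $c$ degenerates to evaluating at the public value), we obtain
\[
\GFT = \int_a^b (v - c)\,\mathbf{1}\!\left\{v - c \ge \frac{1-F(v)}{F'(v)}\right\} F'(v)\, dv,
\]
which is exactly the stated expression.

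There is essentially no technical obstacle here, since the heavy lifting, namely the characterization of the profit-maximizing allocation through virtual valuations in the public-seller setting, has already been done in Theorem~\ref{thm:publicseller-char}. The only points to be careful about are (i) restricting the outer integral to the support $[a,b]$ of $F$ (values outside the support contribute zero mass), and (ii) noting that regularity of $F$ ensures that the threshold $\inf\{t \ge a : \phi_F(t) \ge c\}$ used in defining the payment rule is well-behaved, so the allocation indicator coincides almost everywhere with $\mathbf{1}\{\phi_F(v) \ge c\}$ and the integral is unambiguous.
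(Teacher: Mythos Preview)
Your proposal is correct and follows exactly the same route as the paper: the paper states that the corollary ``directly follows'' from Theorem~\ref{thm:publicseller-char}, and you have simply spelled out that direct derivation by expanding the virtual-valuation threshold and collapsing the degenerate expectation over the public seller's value.
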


Using similar ideas, we prove the following theorem which is the public-buyer analogue of Theorem~\ref{thm:publicseller-char}.
Its proof is deferred to Appendix~\ref{proof:singleton-buyer-gft} due to its similarity.
\begin{theorem}
\label{singleton-buyer-gft}
    Suppose $G$ is a regular distribution in the public-buyer setting. Then, the gains-from-trade from any BNIC and IR mechanism that maximizes the broker's expected profit is $$\int_a^b (v - c) \mathbf{1}\left\{v - c \ge \frac{G(c)}{G'(c)}\right\}\, G'(c) \, dc.$$
\end{theorem}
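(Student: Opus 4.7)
The plan is to mirror the argument used to establish Theorem~\ref{singleton-seller-ic} and Theorem~\ref{thm:publicseller-char}, but with the roles of buyer and seller swapped. In the public-buyer setting, the mechanism $\Mec = (x(c), p_b(c), p_s(c))$ depends only on the seller's reported valuation $c$, and BNIC reduces to the seller's one-shot truthful-reporting constraint, so the standard Myersonian monotonicity argument applies to the seller. Define the seller's (interim) utility by $u_s(c) = p_s(c) - x(c) \cdot c$, the buyer's (now ex-ante) utility by $u_b = \int_a^b (x(c) v - p_b(c))\, G'(c)\, dc$, and the broker's profit by $u_i = \int_a^b (p_b(c) - p_s(c))\, G'(c)\, dc$. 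IR requires $u_s(c) \ge 0$ for all $c$ and $u_b \ge 0$.

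First I would establish the analogue of Theorem~\ref{singleton-seller-ic}: by writing the two BNIC inequalities $u_s(c) \ge p_s(\tilde c) - x(\tilde c) c$ and $u_s(\tilde c) \ge p_s(c) - x(c) \tilde c$ and subtracting, one gets that $x(c)$ is nonincreasing and $u_s'(c) = -x(c)$ almost everywhere. Integrating from $c$ up to $b$ yields $u_s(c) = u_s(b) + \int_c^b x(t)\, dt$. Multiplying by $G'(c)$, integrating over $[a, b]$, and swapping the order of integration in the double integral produces $\int_a^b u_s(c) G'(c)\, dc = u_s(b) + \int_a^b x(t) G(t)\, dt$. Adding $u_b$, $u_i$, and $\int u_s(c) G'(c)\, dc$ telescopes to $\int_a^b (v - c) x(c) G'(c)\, dc$, so
\begin{align*}
u_i = \int_a^b \left((v - c) - \frac{G(c)}{G'(c)}\right) x(c)\, G'(c)\, dc - u_b - u_s(b).
\end{align*}

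Next I would argue, exactly as in the proof of Theorem~\ref{thm:publicseller-char}, that a profit-maximizing broker sets $u_b = 0$ and $u_s(b) = 0$ and chooses $x(c) = \mathbf{1}\{v - c \ge G(c)/G'(c)\}$. Regularity of $G$ makes $\phi_G(c) = c + G(c)/G'(c)$ nondecreasing, so this indicator is the indicator of $c \le t^*$ where $t^* = \sup\{t \le b : \phi_G(t) \le v\}$, and hence $x$ is nonincreasing as required. Feasibility of the remaining constraints is realized by posting the threshold price $t^*$ to the seller (i.e., $p_s(c) = t^* x(c)$) and charging the buyer $p_b(c) = v \cdot x(c)$, which makes $u_s(b) = 0$, keeps $u_s(c) = (t^* - c) x(c) \ge 0$, and gives $u_b = 0$. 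Substituting the optimal $x$ into the integral $\int_a^b (v-c) x(c) G'(c)\, dc$ yields the claimed expression for the gains-from-trade.

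I do not expect a serious technical obstacle; the main care points are (i) getting the sign and boundary condition right when integrating $u_s'(c) = -x(c)$ (anchoring at $c = b$ rather than $c = a$, since it is the highest-cost seller whose IR is tight), and (ii) correctly obtaining $G(c)/G'(c)$ rather than $(1-G(c))/G'(c)$ when the order of integration is swapped. Both are routine once the public-seller proof is set alongside as a template.
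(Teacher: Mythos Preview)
Your proposal is correct and follows essentially the same approach as the paper: you derive the seller-side analogue of Theorem~\ref{singleton-seller-ic} by the standard monotonicity/envelope argument anchored at $c=b$, obtain $u_i = \int_a^b \big((v-c) - G(c)/G'(c)\big)x(c)\,G'(c)\,dc - u_b - u_s(b)$, and then realize the optimum with the threshold allocation $x(c)=\mathbf{1}\{\phi_G(c)\le v\}$ together with $p_s(c)=t^*x(c)$ and $p_b(c)=v\,x(c)$, exactly as the paper does. The only cosmetic difference is that the paper first states the intermediate identity as a separate theorem before specializing to the optimal mechanism.
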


We are finally ready to prove Theorem~\ref{thm:inapx-public}.
\begin{proof}[Proof of Theorem~\ref{thm:inapx-public}]

We first prove the public-seller case.
Let $b > a > 0.$ Define 
\begin{align*}
F(v) = 
\begin{cases} 
    \displaystyle-\frac{a}{v} + 1 &\text{if $a \le v \le b$}\vspace{6pt}\\
    \displaystyle\left(-\frac{a}{b} + 1\right) + \frac{a}{b^2}(v - b) &\text{if $b < v \le 2b$}
\end{cases},
\end{align*}
and let $G$ be the singleton distribution supported on $\{a\}.$ Then, 
\begin{align*}
    \FB &= a\int_a^b (v - a) \cdot \frac{1}{v^2} \, dv + \frac{a}{b^2}\int_b^{2b} (v - a) \, dv\\
    &= a\left(\ln\left(\frac{b}{a}\right) + \frac{a}{b} - 1\right) + \left(\frac{3a}{2} - \frac{a^2}{b}\right),
\end{align*}
and since $F$ is regular, by Corollary~\ref{singleton-seller-gft} and using the facts that $v = (1-F(v))/F'(v)$ on $(a, b)$ and that $b \le (a+2b)/2 \le 2b$,
\begin{align*}
    \GFT &= \int_a^{2b} (v - a)\mathbf{1}\left\{v - a \ge \frac{1 - F(v)}{F'(v)}\right\} \, F'(v) \, dv\\
    &= \frac{a}{b^2}\int_b^{2b} (v - a)\mathbf{1}\left\{v - a \ge \frac{1 - F(v)}{F'(v)}\right\} \, dv \\
    &= \frac{a}{b^2}\int_b^{2b} (v - a)\mathbf{1}\left\{v \ge \frac{a + 2b}{2}\right\}\, dv\\
    &= \frac{a}{b^2}\int_{(a + 2b) / 2}^{2b} (v - a) \, dv \\
    &= \frac{3a(a - 2b)^2}{8b^2}.
\end{align*}
Thus, as $b \rightarrow \infty,$ we have that $\displaystyle\frac{\GFT}{\FB} \rightarrow 0.$

Finally, we prove the public-buyer case.
Let $\displaystyle 0 < \delta < \frac{1}{2}.$ Define
\begin{align*}
    G(c) &= \begin{cases}
        \delta c &\text{if $0 \le c \le 1$}\\
        \rho(c) &\text{if $1 < c \le 2$}
    \end{cases},
\end{align*}
where 
\begin{align*}
    \rho(c) &= k - \frac{a}{c - h}
\end{align*}
and
\begin{align*}
    h = \frac{2 - 3\delta}{1 - 2\delta}, \;
    a = \delta(1 - h)^2, \;
    k = \delta(2 - h)
\end{align*}
are chosen so that
\begin{align*}
    \rho(1) = \delta, \;
    \rho'(1) = \delta, \;
    \rho(2) = 1.
\end{align*}
In addition, let $F$ be the singleton distribution supported on $\{2\}.$ Then,
\begin{align*}
    \FB &= \delta\int_0^1 (2 - c) \, dc + a\int_1^2 (2 - c) \cdot \frac{1}{(c - h)^2} \, dc\\
    &= \frac{3}{2}\delta + a\left(\frac{1}{1 - h} - \ln\left(\frac{2 - h}{1 - h}\right)\right)\\
    &= \frac{3}{2}\delta + \frac{\delta(1 - \delta)^2}{(2\delta - 1)^2}\left(\frac{2\delta - 1}{1 - \delta} - \ln\left(\frac{\delta}{1 - \delta}\right)\right)\\
    &\ge \frac{\delta(1 - \delta)^2}{(2\delta - 1)^2}\left(\frac{2\delta - 1}{1 - \delta} - \ln\left(\frac{\delta}{1 - \delta}\right)\right),
\end{align*}
and since $G$ is regular, by Theorem~\ref{singleton-buyer-gft},
\begin{align*}
    \GFT &= \int_0^2 (2 - c) \mathbf{1}\left\{2 \ge c + \frac{G(c)}{G'(c)}\right\}\, G'(c) \, dc\\
    &= \delta\int_0^1 (2 - c) \, dc\\
    &= \frac{3}{2}\delta,
\end{align*}
since $\displaystyle\phi_G(1) = 1 + \frac{G(1)}{G'(1)} = 2$ and $\phi_G(\cdot)$ is strictly increasing.
Thus,
\begin{align*}
    \frac{\FB}{\GFT} \ge \frac{2(1 - \delta)^2}{3(2\delta - 1)^2}\left(\frac{2\delta - 1}{1 - \delta} - \ln\left(\frac{\delta}{1 - \delta}\right)\right),
\end{align*}
so taking the limit as $\delta \rightarrow 0,$ we obtain
\begin{align*}
    \lim_{\delta \rightarrow 0} \left(\frac{2(1 - \delta)^2}{3(2\delta - 1)^2}\left(\frac{2\delta - 1}{1 - \delta} - \ln\left(\frac{\delta}{1 - \delta}\right)\right)\right) &= \lim_{\delta \rightarrow 0} \frac{2(1 - \delta)}{3(2\delta - 1)} - \lim_{\delta \rightarrow 0} \frac{2(1 - \delta)^2}{3(2\delta - 1)^2} \ln \delta + \lim_{\delta \rightarrow 0} \frac{2(1 - \delta)^2}{3(2\delta - 1)^2} \ln (1 - \delta)\\
    &= -\frac{2}{3} - \lim_{\delta \rightarrow 0} \frac{2(1 - \delta)^2}{3(2\delta - 1)^2} \ln \delta\\
    &= \infty,
\end{align*}
which completes the proof.
\end{proof}




\paragraph{Inapproximability with Symmetric Agents}
Finally, we provide the proof of Theorem~\ref{thm:inapx-symmetric} here. The proof mainly follows from considering the same distribution exploited in Theorem~\ref{thm:inapx-general}.

\begin{proof}
Let $b > a > 0.$ Let
\begin{align*}
    F(v) &= 
    \begin{cases} 
        \displaystyle-\frac{a}{v} + 1 &\text{if $a \le v \le b$}\vspace{6pt}\\
        \displaystyle\left(-\frac{a}{b} + 1\right) + \frac{a}{b^2}(v - b) &\text{if $b < v \le 2b$},
    \end{cases}
\end{align*}
and let $G = F.$ Then,
\begin{align*}
    \FB &= a^2\int_a^b \int_a^v (v - c) \, \frac{1}{c^2} \, dc \, \frac{1}{v^2} \, dv + \frac{a^2}{b^2}\int_b^{2b} \int_a^b (v - c) \, \frac{1}{c^2} \, dc \, dv + \frac{a^2}{b^4}\int_b^{2b} \int_b^v (v - c) \, dc \, dv\\
    &= a^2\int_a^b \left[-\frac{v}{c} - \ln c\right]_a^v \, \frac{1}{v^2} \, dv + \frac{a^2}{b^2}\int_b^{2b} \left[-\frac{v}{c} - \ln c\right]_a^b \, dv + \frac{a^2}{b^4}\int_b^{2b} \left[vc - \frac{c^2}{2}\right]_b^v \, dv\\
    &= a^2\int_a^b \left(\frac{v}{a} - 1 - \ln \left(\frac{v}{a}\right)\right) \, \frac{1}{v^2} \, dv + \frac{a^2}{b^2}\int_b^{2b} \left(\frac{v}{a} - \frac{v}{b} - \ln \left(\frac{b}{a}\right)\right) dv + \frac{a^2}{2b^4}\int_b^{2b} (v - b)^2 \, dv.
\end{align*}
Expanding the integral once again, we obtain
\begin{align*}
    \FB
    &= a^2\left(\frac{1}{a}\ln\left(\frac{b}{a}\right) + \left(\frac{1}{b} - \frac{1}{a}\right) + \frac{1}{b}\left(\ln\left(\frac{b}{a}\right) + 1\right) - \frac{1}{a}\right) + \frac{a^2}{b^2}\left(\frac{3b^2}{2a} - \frac{3b}{2} - b\ln\left(\frac{b}{a}\right)\right) + \frac{a^2}{2b^4}\left(\frac{b^3}{3}\right)\\
    &= \left(a\ln\left(\frac{b}{a}\right) + \left(\frac{a^2}{b} - a\right) + \frac{a^2}{b}\left(\ln\left(\frac{b}{a}\right) + 1\right) - a\right) + \left(\frac{3a}{2} - \frac{3a^2}{2b} - \frac{a^2}{b}\ln\left(\frac{b}{a}\right)\right) + \frac{a^2}{6b}
    \\
    &\ge
    a \ln (b) - a \ln a - 2a  - \frac{3a^2}{2b}.
\end{align*}
In addition, by Theorem~\ref{thm:myerson}, we have that
\begin{align*}
    \GFT &= \iint (v - c)\mathbf{1}\left\{v - c \ge \frac{1 - F(v)}{F'(v)} + \frac{G(c)}{G'(c)}\right\} \, F'(v) \, dv \, G'(c) \, dc\\
    &= \frac{a}{b^2}\int_b^{2b} \int (v - c)\mathbf{1}\left\{v - c \ge \frac{1 - F(v)}{F'(v)} + \frac{G(c)}{G'(c)}\right\} \, G'(c) \, dc \, dv \\
    &\le \frac{a}{b^2}\int_b^{2b} \int (v - c)\mathbf{1}\left\{v - c \ge 0\right\} \, G'(c) \, dc \, dv\\
    &= \frac{a^2}{b^2}\int_b^{2b} \int_a^b (v - c) \frac{1}{c^2} \, dc \, dv + \frac{a^2}{b^4}\int_b^{2b} \int_b^v (v - c) \, dc \, dv\\
    &= \frac{3a}{2} - \frac{3a^2}{2b} - \frac{a^2}{b}\ln\left(\frac{b}{a}\right) + \frac{a^2}{6b} 
    \\
    &\le
    \frac{3a}{2} + \frac{a^2}{6b}.
\end{align*}
Thus, since $\displaystyle\GFT \sim \frac{3a}{2}$ and $\FB \sim a \ln b,$ we have that $\displaystyle\frac{\GFT}{\FB} \rightarrow 0$ as $b \rightarrow \infty.$
\end{proof}

\gpedit{
\paragraph{Inapproximability of Social Welfare}Interestingly, in several of the settings where we have proven inapproximability of the first-best gains-from-trade, the inapproximability carries over to the first-best social welfare, even though the latter is at least as easy to approximate as the former.

\begin{proof}[Proof of Theorem~\ref{thm:inapx-sw}]
    Suppose we have a family of problem instances $\{(F_\delta, G_\delta)\}_{\delta \in \mathbb{R}_+}$ satisfying the following:
    \begin{enumerate}
        \item The supports of the $G_\delta$ are uniformly upper-bounded by some constant $M.$
        \item For any $(F_\delta, G_\delta),$ we have that $\GFT < \delta \FB.$
        \item $\lim_{\delta \rightarrow 0} \FB = \infty$ 
    \end{enumerate}
    Then, 
    \begin{align*}
        \frac{\SW}{\FBW} &= \frac{\Exu{c \sim G_\delta}{c} + \GFT}{\Exu{c \sim G_\delta}{c} + \FB} \le \frac{M + \GFT}{M + \FB} \le \frac{M + \delta\FB}{M + \FB} = \frac{M / \FB + \delta}{M / \FB + 1}
    \end{align*}
    for any $(F_\delta, G_\delta),$ so since $\displaystyle \frac{M / \FB + \delta}{M / \FB + 1}\rightarrow 0$ as $\delta \rightarrow 0,$ we have that $\displaystyle\frac{\SW}{\FBW} \rightarrow 0$ as $\delta \rightarrow 0.$

    To complete the proof, note that by taking $(F_\delta, G_\delta)$ to be the problem instance from Theorem~\ref{thm:inapx-general} (respectively the public-seller problem instance from Theorem~\ref{thm:inapx-public}) such that $\GFT < \delta\FB,$ we obtain a family of problem instances $\{(F_\delta, G_\delta)\}_{\delta \in \mathbb{R}_+}$ satisfying the three properties above. 
\end{proof}
}

\section{Conclusion}
We study bilateral trade with a broker and identify the intrinsic trade-off between the broker's expected profit and her mechanism's approximation factor to the first-best GFT.
In particular, we characterize the extent to which constant-factor approximation is possible and complement such results by providing inapproximability results beyond those regimes.

Our work introduces several interesting avenues to explore.
First of all, it remains as a major open problem whether a constant-factor approximation is possible for more general classes of distributions (in particular for regular and one-sided MHR distributions) when the broker runs an optimal posted-pricing mechanism.
It would also be interesting to investigate the exact trade-off between the broker's expected profit and the gains-from-trade.
Finally, it would be worth exploring the approximation factor with respect to the social welfare instead of the gains-from-trade, which typically admits more positive results.



\bibliographystyle{plainnat}
\bibliography{ref}

\appendix
\section{Proof of Theorem~\ref{thm:uniform}}
\label{proof:uniform}
We first prove the lower bounds and then present the worst-case instances that match the lower bounds.
\subsection{Lower bounds}
By Lemma~\ref{lem:normalization}, we can assume that $[a_1, b_1] = [0, 1]$ without loss of generality.
Now, let $F$ and $G$ be the uniform distributions on $[0, 1]$ and $[a, b],$ respectively,
\ie let $$F(v) = \begin{cases} v &\text{if $v \in [0, 1]$}\\ 0 &\text{otherwise}\end{cases}$$ and $$ G(c) = \begin{cases} \displaystyle\frac{c - a}{b - a} &\text{if $c \in [a, b]$}\\ 0 &\text{otherwise}\end{cases}.$$
(Note that $[a, b]$ might not lie completely in $[0, \infty),$ since we normalize $F$ and $G$ so that $F$ is supported on $[0, 1]$.) It is straightforward to check that both $F$ and $G$ are regular. Thus, by Theorem~\ref{thm:myerson}, the gains-from-trade of any BNIC and IR mechanism that maximizes the broker's expected profit is given by
\begin{align*}
    \GFT &= \iint (v - c)\mathbf{1}\left\{v - c \ge \frac{1 - F(v)}{F'(v)} + \frac{G(c)}{G'(c)}\right\} \, dF(v) \, dG(c)\\
    &= \iint (v - c)\mathbf{1}\left\{v - c \ge (1 - v) + (c - a) \right\} \, dF(v) \, dG(c)\\
    &= \iint (v - c)\mathbf{1}\left\{v - c \ge T_a \right\} \, dF(v) \, dG(c),
\end{align*}
where $\displaystyle T_a = \frac{1 - a}{2}$.

\begin{table}
    \centering
    \renewcommand{\arraystretch}{1.5}
    \begin{tabular}{|c|c|c|c|c|c|}
        \hline
        & \multicolumn{5}{c|}{Subcase}\\
        \hline
        Case & $(a)$ & $(b)$ & $(c)$ & $(d)$ & $(e)$\\
        \hline
        1 & $\hyperref[sec:1a]{1}$ & $\hyperref[sec:1b]{2 / 3}$ & $\hyperref[sec:1c]{2 / 3}$ & $\hyperref[sec:1d]{2 / 3}$ &\\
        \hline
        2 & & & $\hyperref[sec:2c]{4 / 7}$ & $\hyperref[sec:2d]{4 / 7}$ & $\hyperref[sec:2e]{1 / 2}$\\
        \hline
        3 & & & $\hyperref[sec:3c]{4 / 7}$ & & $\hyperref[sec:3e]{1 / 2}$\\
        \hline
        4 & & & & $\hyperref[sec:4d]{4 / 7}$ & $\hyperref[sec:4e]{1 / 2}$\\
        \hline
        5 & & & & & $\hyperref[sec:5e]{1 / 2}$\\
        \hline
    \end{tabular}
    \captionof{table}{Summary of Lower Bounds}
    \label{table:uniform-lower-bounds}
\end{table}
We will now complete the proof by considering the following cases based on the values of $a$ and $b$:
\begin{enumerate}[\text{Case} 1:]
    \item $b \le 0$
    \item $a \le 0 < b \le 1$
    \item $a \le 0 < 1 < b$
    \item $0 < a < b \le 1$
    \item $0 < a < 1 < b$
    \item $a \ge 1$
\end{enumerate}

We will further break down each case (except for Case 6) into the following subcases based on the values of $a + T_a$ and $b + T_a:$
\begin{enumerate}[\text{Subcase} (a):]
    \item $b + T_a \le 0$ 
    \item $a + T_a \le 0 < b + T_a \le 1$
    \item $a + T_a \le 0 < 1 < b + T_a$
    \item $0 < a + T_a < b + T_a \le 1$
    \item $0 < a + T_a < 1 < b + T_a$
\end{enumerate}
Note that not every case-subcase combination is possible, as we will see when we investigate each case-subcase combination below. Table~\ref{table:uniform-lower-bounds} summarizes the lower-bounds for each case-subcase combination. (The empty cells correspond to impossible case-subcase combinations.)

We will now derive the lower bounds in Table~\ref{table:uniform-lower-bounds}. The following two tables will be helpful for our analysis. Specifically, Table~\ref{table:case-integrals} summarizes $\FB$ for each case.
\begin{center}
    \renewcommand{\arraystretch}{2}
    \begin{tabular}{|c|c|c|c|}
        \hline
        Case & Case Description & $\FB$ & Derivation\\
        \hline
        1 & $b \le 0$ & $\displaystyle\frac{1 - a - b}{2}$ & $\hyperref[sec:derivations]{1}$\\ 
        \hline
        2 & $a \le 0 < b \le 1$ & $\displaystyle\frac{3a^2 - 3a + b^3 - 3b^2 + 3b}{6(b - a)}$ & $\hyperref[sec:derivations]{2}$\\
        \hline
        3 & $a \le 0 < 1 < b$ & $\displaystyle\frac{3a^2 - 3a + 1}{6(b - a)}$ & $\hyperref[sec:derivations]{3}$\\
        \hline
        4 & $0 < a < b \le 1$ & $\displaystyle\frac{a^2 + ab + b^2 - 3a - 3b + 3}{6}$  & $\hyperref[sec:derivations]{4}$\\
        \hline
        5 & $0 < a < 1 < b$ & $\displaystyle\frac{(1 - a)^3}{6(b - a)}$ & $\hyperref[sec:derivations]{5}$\\
        \hline
    \end{tabular}
    \captionof{table}{First-Best Gains-from-Trade by Case}
    \label{table:case-integrals}
\end{center}
Furthermore, Table~\ref{table:subcase-integrals} summarizes the GFT of the broker's optimal mechanism in each subcase.
\begin{center}
    \renewcommand{\arraystretch}{2}
    \begin{tabular}{|c|c|c|c|}
        \hline
        Subcase & Subcase Description & $\GFT$ & Derivation\\
        \hline
        $(a)$ & $b + T_a \le 0$ & $\displaystyle\frac{1 - a - b}{2}$ & $\hyperref[sec:derivations]{1}$\\ 
        \hline
        $(b)$ & $a + T_a \le 0 < b + T_a \le 1$ & $\displaystyle\frac{(a + b - 1)(a^2 - 4ab + 10a + 4b^2 - 8b + 1)}{24(b - a)}$ & $\hyperref[sec:derivations]{6}$\\
        \hline
        $(c)$ & $a + T_a \le 0 < 1 < b + T_a$ & $\displaystyle\frac{(3a - 1)^2}{24(b - a)}$ & $\hyperref[sec:derivations]{7}$\\
        \hline
        $(d)$ & $0 < a + T_a < b + T_a \le 1$ & $\displaystyle\frac{(a + 2b - 3)^2}{24}$ & $\hyperref[sec:derivations]{8}$\\
        \hline
        $(e)$ & $0 < a + T_a < 1 < b + T_a$ & $\displaystyle \frac{(1 - a)^3}{12(b - a)}$ & $\hyperref[sec:derivations]{9}$\\
        \hline
    \end{tabular}
    \captionof{table}{Gains-from-Trade of Broker's Optimal Mechanism by Subcase}
    \label{table:subcase-integrals}
\end{center}

We now prove the desired lower bounds in what follows.

\paragraph{Case 1: $b \le 0$}

\medskip \noindent \textit{Case 1(a):} $b + T_a \le 0$\\\\
\label{sec:1a}We have that $$\frac{\GFT}{\FB} = 1.$$\\\\
\textit{Case 1(b):} $a + T_a \le 0 < b + T_a \le 1$\\\\
\label{sec:1b}We seek to lower bound $$\frac{\GFT}{\FB} = \frac{-a^2 + 4ab - 10a - 4b^2 + 8b - 1}{12(b - a)}.$$ Taking the derivative with respect to $b,$ we obtain
\begin{align*}
    \frac{\partial}{\partial b}\left[\frac{-a^2 + 4ab - 10a - 4b^2 + 8b - 1}{12(b - a)}\right] &= \frac{12(b - a)(4a - 8b + 8) - 12(-a^2 + 4ab - 10a - 4b^2 + 8b - 1)}{(12(b - a))^2}\\
    &= \frac{(4ab - 8b^2 + 8b - 4a^2 + 8ab - 8a) + (a^2 - 4ab + 10a + 4b^2 - 8b + 1)}{12(b - a)^2}\\
    &= \frac{8ab - 4b^2 - 3a^2 + 2a + 1}{12(b - a)^2}\\
    &= \frac{(-a + 2b + 1)(3a - 2b + 1)}{12(b - a)^2}.
\end{align*}
Now, since $-a + 2b + 1 \ge 0$ and $3a - 2b + 1 \le 0$ whenever $a \in (-\infty, -1]$ and $\displaystyle b \in \left(\frac{a - 1}{2}, \frac{a + 1}{2}\right],$ $\displaystyle\frac{\GFT}{\FB}$ is an decreasing function with respect to $b$ on $a \in (-\infty, -1]$ and $\displaystyle b \in \left(\frac{a - 1}{2}, \frac{a + 1}{2}\right].$ Thus,
\begin{align*}
    \frac{\GFT}{\FB} &= \frac{-a^2 + 4ab - 10a - 4b^2 + 8b - 1}{12(b - a)}\\
    &\ge \frac{-a^2 + (2a^2 + 2a) - 10a - (a^2 + 2a + 1) + (4a + 4) - 1}{6 - 6a} \\
    &= \frac{2 - 6a}{6 - 6a}\\
    &\ge \frac{2 - 6(-1)}{6 - 6(-1)}\\
    &= \frac{2}{3},
\end{align*}
since $\displaystyle\frac{2 - 6a}{6 - 6a}$ is decreasing on $(-\infty, -1]$.
\ \\\\
\textit{Case 1(c):} $a + T_a \le 0 < 1 < b + T_a$\\\\
\label{sec:1c}
Since $(b - a)(1 - a - b)$ is increasing with respect to $b$ on $(-\infty, 0]$, we have that
\begin{align*}
    \frac{\GFT}{\FB} &= \frac{(3a - 1)^2}{12(b - a)(1 - a - b)}\\
    &\ge \frac{(3a - 1)^2}{12a(a - 1)}\\
    &\ge \frac{2}{3},\\ 
\end{align*}
where the last inequality follows from $\displaystyle \frac{(3a - 1)^2}{12a(a - 1)}$ being decreasing on $(-\infty, -1]$.
\ \\
\textit{Case 1(d):} $0 < a + T_a < b + T_a \le 1$\\\\
\label{sec:1d}We seek to lower bound $$\frac{\GFT}{\FB} = \frac{(a + 2b - 3)^2}{12(1 - a - b)}.$$ Taking the derivative with respect to $b,$ we obtain
\begin{align*}
    \frac{\partial}{\partial b}\left[\frac{(a + 2b - 3)^2}{12(1 - a - b)}\right] &= \frac{48(1 - a - b)(a + 2b - 3) + 12(a + 2b - 3)^2}{(12(1 - a - b))^2}\\
    &= \frac{(a + 2b - 3)(1 - 3a - 2b)}{12(1 - a - b)^2}.
\end{align*}
Thus, since $$a + 2b - 3 \le 0$$ and $$1 - 3a - 2b \ge 0$$ whenever $a \in [-1, 0]$ and $b \in [-1, 0],$ we have that $\displaystyle\frac{\GFT}{\FB}$ is decreasing with respect to $b$ on $a \in [-1, 0]$ and $b \in [-1, 0].$ Hence,
\begin{align*}
    \frac{\GFT}{\FB} &= \frac{(a + 2b - 3)^2}{12(1 - a - b)}\\
    &\ge \frac{(a - 3)^2}{12(1 - a)} \\
    &\ge \frac{2}{3},
\end{align*}
since $\displaystyle\frac{(a - 3)^2}{12(1 - a)}$ is increasing on $[-1, 0]$.

\ \\\\
\textit{Case 1(e):} $0 < a + T_a < 1 < b + T_a$\\\\
Since
\begin{align*}
    a + T_a &> 0\\
    a &> -1,
\end{align*}
we have that
\begin{align*}
    b + T_a &> 1\\
    b &> \frac{a + 1}{2} > 0,
\end{align*}
so this case is impossible because $b \le 0$ in Case 1.\\\\
\textbf{Case 2:} $a \le 0 < b \le 1$\\\\
\textit{Case 2(a):} $b + T_a \le 0$\\\\
Since $T_a > 0$ and $b > 0,$ this case is impossible.\\\\
\textit{Case 2(b):} $a + T_a \le 0 < b + T_a \le 1$\\\\
Since 
\begin{align*}
    a + T_a &\le 0\\
    a &\le -1
\end{align*}
and $$b + T_a \le 1,$$ we have that $$b \le \frac{a + 1}{2} \le 0,$$ so this case is impossible because $b > 0$ in Case 2.\\\\
\textit{Case 2(c):} $a + T_a \le 0 < 1 < b + T_a$\\\\
\label{sec:2c}Observe that $b^3 - 3b^2 + 3b$ is increasing on $[0, 1]$ and that $\displaystyle\frac{(3a - 1)^2}{4(3a^2 - 3a + 1)}$ is decreasing on $(-\infty, -1]$.
Thus, we have
\begin{align*}
    \frac{\GFT}{\FB} = \frac{(3a - 1)^2}{4(3a^2 - 3a + b^3 - 3b^2 + 3b)}
    \ge \frac{(3a - 1)^2}{4(3a^2 - 3a + 1)}
    \ge \frac{4}{7} .
\end{align*}
\ \\
\textit{Case 2(d):} $0 < a + T_a < b + T_a \le 1$\\\\
\label{sec:2d}We seek to lower bound $$\frac{\GFT}{\FB} = \frac{(a + 2b - 3)^2(b - a)}{4(3a^2 - 3a + b^3 - 3b^2 + 3b)}.$$ Taking the derivative with respect to $b,$ we obtain
\begin{align*}
    \frac{\partial}{\partial b}\left[\frac{(a + 2b - 3)^2(b - a)}{4(3a^2 - 3a + b^3 - 3b^2 + 3b)}\right] &= \frac{4(3a^2 - 3a + b^3 - 3b^2 + 3b)(4(a + 2b - 3)(b - a) + (a + 2b - 3)^2)}{(4(3a^2 - 3a + b^3 - 3b^2 + 3b))^2}\\
    &- \frac{4(a + 2b - 3)^2(b - a)(3b^2 - 6b + 3)}{(4(3a^2 - 3a + b^3 - 3b^2 + 3b))^2}\\
    &= -\frac{12(3a^2 - 3a + b^3 - 3b^2 + 3b)(a + 2b - 3)(a - 2b + 1)}{(4(3a^2 - 3a + b^3 - 3b^2 + 3b))^2}\\
    &- \frac{12(a + 2b - 3)^2(b - a)(b - 1)^2}{(4(3a^2 - 3a + b^3 - 3b^2 + 3b))^2}\\
    &= -\frac{12(a + 2b - 3)}{(4(3a^2 - 3a + b^3 - 3b^2 + 3b))^2}\\
    &\{(3a^3 + ab^3 - 3ab^2 + 9ab - 6a^2b - 2b^4 + 7b^3 - 9b^2 - 3a + 3b)\\
    &+ (-ab^3 + 2b^4 - 7b^3 - a^2b^2 + 5ab^2 + 8b^2 + 2a^2b - 7ab - 3b - a^2 + 3a)\}\\
    &= -\frac{12(a + 2b - 3)}{(4(3a^2 - 3a + b^3 - 3b^2 + 3b))^2}(3a^3 + 2ab^2 + 2ab - 4a^2b - b^2 - a^2b^2 - a^2).
\end{align*}
Now, since each of the terms in the rightmost factor is nonpositive and $a + 2b - 3 \le 0$ whenever $a \in (-1, 0]$ and $\displaystyle b \in \left(0, \frac{a + 1}{2}\right],$ we have that
$\displaystyle\frac{(a + 2b - 3)^2(b - a)}{4(3a^2 - 3a + b^3 - 3b^2 + 3b)}$ is decreasing with respect to $b$ on $a \in (-1, 0]$ and $\displaystyle b \in \left(0, \frac{a + 1}{2}\right].$ Hence,
\begin{align*}
    \frac{\GFT}{\FB} &\ge \frac{(2a - 2)^2((1 - a) / 2)}{4(3a^2 - 3a + (a^3 - 3a^2 + 3a + 7) / 8)} \\
    &= \frac{4(1 - a)^3}{a^3 + 21a^2 - 21a + 7}.
\end{align*}
Taking the derivative, we obtain
\begin{align*}
    \frac{d}{da}\left[\frac{4(1 - a)^3}{a^3 + 21a^2 - 21a + 7}\right] &= \frac{-12(1 - a)^2(a^3 + 21a^2 - 21a + 7) - 4(1 - a)^3(3a^2 + 42a - 21)}{(a^3 + 21a^2 - 21a + 7)^2}\\
    &= -\frac{96(1 - a)^2a^2}{(a^3 + 21a^2 - 21a + 7)^2},
\end{align*}
so $\displaystyle\frac{4(1 - a)^3}{a^3 + 21a^2 - 21a + 7}$ is decreasing on $[-1, 0].$ Therefore, $$\frac{\GFT}{\FB} \ge \frac{4}{7}.$$
\ \\
\textit{Case 2(e):} $0 < a + T_a < 1 < b + T_a$\\\\
\label{sec:2e}

Since $b^3 - 3b^2 + 3b$ is increasing on $[0, 1]$ and $\displaystyle\frac{(1 - a)^3}{2(3a^2 - 3a + 1)}$ is decreasing on $[-1, 0]$, we have that
\begin{align*}
    \frac{\GFT}{\FB} = \frac{(1 - a)^3}{2(3a^2 - 3a + b^3 - 3b^2 + 3b)}
    \ge \frac{(1 - a)^3}{2(3a^2 - 3a + 1)}
    \ge \frac{1}{2}. 
\end{align*}
\ \\
\textbf{Case 3:} $a \le 0 < 1 < b$\\\\
\textit{Cases 3(a), (b), (d):} $b + T_a \le 1$\\\\
Since $b > 1$ and $T_a > 0,$ this case is impossible.\\\\
\textit{Case 3(c):} $a + T_a \le 0 < 1 < b + T_a$\\\\
\label{sec:3c}Since $\displaystyle\frac{(3a - 1)^2}{4(3a^2 - 3a + 1)}$ is decreasing on $(-\infty, -1]$, we have that
\begin{align*}
    \frac{\GFT}{\FB} &= \frac{(3a - 1)^2}{4(3a^2 - 3a + 1)}\\
    &\ge \frac{4}{7}.
\end{align*}
\ \\
\textit{Case 3(e):} $0 < a + T_a < 1 < b + T_a$\\\\
\label{sec:3e}Since $\displaystyle\frac{(1 - a)^3}{2(3a^2 - 3a + 1)}$ is decreasing on $[-1, 0]$, we have that
\begin{align*}
    \frac{\GFT}{\FB} &= \frac{(1 - a)^3}{2(3a^2 - 3a + 1)}\\
    &\ge \frac{1}{2}.
\end{align*}
\ \\
\textbf{Case 4:} $0 < a < b \le 1$\\\\
\textit{Case 4(a):} $b + T_a \le 0$\\\\
Since $b > 0$ and $T_a > 0,$ this case is impossible.\\\\
\textit{Cases 4(b), (c):} $a + T_a \le 0$\\\\
Since $a > 0$ and $T_a > 0,$ this case is impossible.\\\\
\textit{Case 4(d):} $0 < a + T_a < b + T_a \le 1$\\\\
\label{sec:4d}We seek to minimize $$\frac{\GFT}{\FB} = \frac{(a + 2b - 3)^2}{4(a^2 + ab + b^2 - 3a - 3b + 3)}.$$ Taking the derivative with respect to $b,$
\begin{align*}
    \frac{\partial}{\partial b}\left[\frac{(a + 2b - 3)^2}{4(a^2 + ab + b^2 - 3a - 3b + 3)}\right] &= \frac{16(a^2 + ab + b^2 - 3a - 3b + 3)(a + 2b - 3) - 4(a + 2b - 3)^3}{(4(a^2 + ab + b^2 - 3a - 3b + 3))^2}\\
    &= \frac{12(a + 2b - 3)(a - 1)^2}{(4(a^2 + ab + b^2 - 3a - 3b + 3))^2}.
\end{align*}
Now, since $a + 2b - 3 \le 0$ on $b \in \displaystyle\left(a, \frac{a + 1}{2}\right],$ we have that $\displaystyle\frac{(a + 2b - 3)^2}{4(a^2 + ab + b^2 - 3a - 3b + 3)}$ is decreasing with respect to $b$ on $a \in (0, 1)$ and $\displaystyle b \in \left(a, \frac{a + 1}{2}\right].$ Thus,
\begin{align*}
    \frac{\GFT}{\FB} &\ge \frac{(2a - 2)^2}{4a^2 + 2a(a + 1) + (a^2 + 2a + 1) - 12a - 6(a + 1) + 12}\\
    &= \frac{(2a - 2)^2}{7(a - 1)^2}\\
    &= \frac{4}{7}.
\end{align*}
\ \\
\textit{Case 4(e):} $0 < a + T_a < 1 < b + T_a$\\\\
\label{sec:4e}We seek to lower bound $$\frac{\GFT}{\FB} = \frac{(1 - a)^3}{2(b - a)(a^2 + ab + b^2 - 3a - 3b + 3)}.$$ Taking the derivative of the denominator with respect to $b,$ we obtain 
\begin{align*}
    \frac{\partial}{\partial b}[2(b - a)(a^2 + ab + b^2 - 3a - 3b + 3)] &= 2(b - a)(2b + a - 3) + 2(a^2 + ab + b^2 - 3a - 3b + 3)\\
    &= 6(b - 1)^2,
\end{align*}
so $2(b - a)(a^2 + ab + b^2 - 3a - 3b + 3)$ is increasing with respect to $b$ on $[0, 1].$ Thus,
\begin{align*}
    \frac{\GFT}{\FB} &= \frac{(1 - a)^3}{2(b - a)(a^2 + ab + b^2 - 3a - 3b + 3)}\\
    &\ge \frac{(1 - a)^3}{2(1 - a)(a^2 - 2a + 1)}\\
    &= \frac{1}{2}.
\end{align*}
\ \\
\textbf{Case 5:} $0 < a < 1 < b$\\\\
\textit{Cases 5(a), (b), (d):} $b + T_a \le 1$\\\\
Since $b > 1$ and $T_a > 0,$ this case is impossible.\\\\
\textit{Case 5(c):} $a + T_a \le 0 < 1 < b + T_a$\\\\
Since $a > 0$ and $T_a > 0,$ this case is impossible.\\\\
\textit{Case 5(e):} $0 < a + T_a < 1 < b + T_a$\\\\
\label{sec:5e}We have that $$\frac{\GFT}{\FB} = \frac{1}{2}.$$\\
\textbf{Case 6:} $a \ge 1$\\\\
We have that $\GFT = 0 = \FB.$

\subsection{Derivations for Table~\ref{table:case-integrals} and~\ref{table:subcase-integrals}}
Here, we provide the derivations of each of the integrals in Tables~\ref{table:case-integrals} and~\ref{table:subcase-integrals}.
Each derivation is indexed by its corresponding number in the tables.
\label{sec:derivations}
\begin{enumerate}
    \item \begin{align*}
    \frac{1}{b - a}\int_0^1 \int_a^b (v - c) \, dc \, dv &= \frac{1}{b - a}\int_0^1 \left[vc - \frac{c^2}{2}\right]_a^b \, dv\\
    &= \frac{1}{b - a}\int_0^1 \left(v(b - a) - \frac{b^2 - a^2}{2}\right) \, dv\\
    &= \frac{1 - a - b}{2}
    \end{align*}
    \item \begin{align*}
        &\frac{1}{b - a}\int_0^b \int_a^v (v - c) \, dc \, dv + \frac{1}{b - a}\int_b^1 \int_a^b (v - c) \, dc \, dv\\
        &= \frac{1}{b - a}\int_0^b \left[vc - \frac{c^2}{2}\right]_a^v \, dv + \frac{1}{b - a}\int_b^1 \left[vc - \frac{c^2}{2}\right]_a^b \, dv\\
        &= \frac{3a^2 - 3a + b^3 - 3b^2 + 3b}{6(b - a)}
        \end{align*}
    \item \begin{align*}
        \frac{1}{b - a}\int_0^1 \int_a^v (v - c) \, dc \, dv
        &= \frac{1}{b - a}\int_0^1 \left[vc - \frac{c^2}{2}\right]_a^v \, dv\\
        &= \frac{3a^2 - 3a + 1}{6(b - a)}
        \end{align*}
    \item \begin{align*}
        \frac{1}{b - a}\int_a^b \int_c^1 (v - c) \, dv \, dc &= \frac{1}{b - a}\int_a^b \left[\frac{v^2}{2} - cv\right]_c^1 \, dc\\
        &= \frac{1}{b - a}\int_a^b \left[\frac{1 - c^2}{2} - c(1 - c)\right] \, dc\\
        &= \frac{a^2 + ab + b^2 - 3a - 3b + 3}{6}\\
        \end{align*}
    \item \begin{align*}
        \frac{1}{b - a}\int_a^1 \int_a^v (v - c) \, dc \, dv &= \frac{1}{b - a}\int_a^1 \left[vc - \frac{c^2}{2}\right]_a^v \, dv\\
        &= \frac{1}{b - a}\int_a^1 \left(v(v - a) - \frac{v^2 - a^2}{2}\right) \, dv\\
        &= \frac{(1 - a)^3}{6(b - a)}
        \end{align*}
    \item \begin{align*}
        &\frac{1}{b - a}\int_0^{b + T_a} \int_a^{v - T_a} (v - c) \, dc \, dv + \frac{1}{b - a}\int_{b + T_a}^1 \int_a^b (v - c) \, dc \, dv\\ 
        &= \frac{1}{b - a}\int_0^{b + T_a} \left(v(v - T_a - a) - \frac{(v - T_a)^2 - a^2}{2}\right) \, dv + \frac{1}{b - a}\int_{b + T_a}^1 \left(v(b - a) - \frac{b^2 - a^2}{2}\right) \, dv\\ 
        &= \frac{1}{b - a}\int_0^{b + T_a} \left(\frac{v^2}{2} - av + \frac{3a^2 + 2a - 1}{8}\right) \, dv + \int_{b + T_a}^1 \left(v - \frac{a + b}{2}\right) \, dv\\
        &= \frac{b + T_a}{b - a}\left(\frac{(b + T_a)^2}{6} - \frac{a(b + T_a)}{2} + \frac{3a^2 + 2a - 1}{8}\right) + \left(\frac{1 - (b + T_a)^2}{2} - \frac{(a + b)(1 - b - T_a)}{2}\right)\\
        &= \frac{(a + b - 1)(a^2 - 4ab + 10a + 4b^2 - 8b + 1)}{24(b - a)}
        \end{align*}
    \item \begin{align*}
        \frac{1}{b - a}\int_0^1 \int_a^{v - T_a} (v - c) \, dc \, dv &= \frac{1}{b - a}\int_0^1 \left[vc - \frac{c^2}{2}\right]_a^{v - T_a} \, dv\\ 
        &= \frac{1}{b - a}\int_0^1 \left(\frac{v^2}{2} - av - \frac{T_a^2}{2} + \frac{a^2}{2}\right) \, dv\\
        &= \frac{1}{b - a}\int_0^1 \left(\frac{v^2}{2} - av + \frac{3a^2 + 2a - 1}{8}\right) \, dv\\
        &= \frac{(3a - 1)^2}{24(b - a)}
    \end{align*}
    \item \begin{align*}
        \frac{1}{b - a}\int_a^b \int_{c + T_a}^1 (v - c) \, dv \, dc 
        &= \frac{1}{b - a}\int_a^b \left(\frac{1 - (c + T_a)^2}{2} - c(1 - c - T_a)\right) \, dc\\ 
        &= \frac{1}{b - a}\int_a^b \left(\frac{c^2}{2} - c + \frac{1 - T_a^2}{2}\right) \, dc\\
        &= \frac{1}{b - a}\left(\frac{b^3 - a^3}{6} - \frac{b^2 - a^2}{2} + \frac{(1 - T_a^2)(b - a)}{2}\right)\\
        &= \frac{a^2 + ab + b^2}{6} - \frac{a + b}{2} + \frac{3 + 2a - a^2}{8}\\
        &= \frac{(a + 2b - 3)^2}{24}
    \end{align*}
    \item \begin{align*}
        \frac{1}{b - a}\int_{a + T_a}^1 \int_a^{v - T_a} (v - c) \, dc \, dv
        &= \frac{1}{b - a}\int_{a + T_a}^1 \left(\frac{v^2}{2} - av - \frac{T_a^2}{2} + \frac{a^2}{2}\right) \, dv\\
        &= \frac{1}{b - a}\int_{a + T_a}^1 \left(\frac{v^2}{2} - av + \frac{3a^2 + 2a - 1}{8}\right) \, dv\\
        &= \frac{(3a - 1)^2}{24(b - a)} - \frac{(a + T_a)(a^2 - 4aT_a + 4T_a^2 + 6a - 3)}{24(b - a)}\\
        &= \frac{(3a - 1)^2}{24(b - a)} - \frac{(a + 1)(2a^2 + a - 1)}{24(b - a)}\\
        &= \frac{(1 - a)^3}{12(b - a)}\\
        \end{align*}
\end{enumerate}

\subsection{Upper Bounds}
Table~\ref{table:uniform-tight-bound-examples} summarizes the specific problem instances that yield the tightness results.
In particular, $1 / 2$ is a tight lower bound on the approximation factor to the first-best GFT under the broker's optimal mechanism in the uniform setting.
We omit the calculations as it is a simple plugging-in of the described values.
\begin{table}
    \centering
    \renewcommand{\arraystretch}{1.5}
    \begin{tabular}{|c|c|c|c|c|c|}
        \hline
        & \multicolumn{5}{c|}{Subcase}\\
        \hline
        Case & $(a)$ & $(b)$ & $(c)$ & $(d)$ & $(e)$\\
        \hline
        1 & $[-3, -2]$ & $[-1, 0]$ & $[-1 - \epsilon, 0]$ & $[-1 + \epsilon, 0]$ &\\
        \hline
        2 & & & $[-1, 1]$ & $[0, 1/2]$ & $[0, 1]$\\
        \hline
        3 & & & $[-1, 2]$ & & $[0, 2]$\\
        \hline
        4 & & & & $[1 / 3, 2 / 3]$ & $[1 / 2, 1]$\\
        \hline
        5 & & & & & $[1 / 2, 3 / 2]$\\
        \hline
    \end{tabular}
    \captionof{table}{Examples of Values of $a$ and $b$ that Match Lower Bounds}\label{table:uniform-tight-bound-examples}
\end{table}

\section{Proof of Theorem~\ref{singleton-buyer-gft}; Public-Buyer}
\label{proof:singleton-buyer-gft}
We first introduce some notation:
\begin{definition}
    Given a direct mechanism $\Mec = (x(c), p(c))$\footnote{In the public-buyer setting, the allocation function $x$ and payment function $p$ of $\Mec$ only depend on the seller's reported valuation $c.$}, we define the following functions for the expected utilities of the buyer, intermediary, and seller, respectively, assuming truthful reporting by the seller:
    \begin{itemize}
        \item $\displaystyle u_b = \int_a^b (x(c) \cdot v - p_b(c)) \, G'(c) \, dc$ 
        \item $\displaystyle u_i = \int_a^b (p_b(c) - p_s(c)) \, G'(c) \, dc$
        \item $\displaystyle u_s(c) = p_s(c) - x(c) \cdot c$
    \end{itemize}
\end{definition}
Note that with a public buyer, $\Mec$ is BNIC if $u_s(c) \ge p_s(\tilde{c}) - x(\tilde{c}) \cdot c$ for every $c$ and $\tilde{c},$ and $\Mec$ is IR if $u_b \ge 0$ and $u_s(c) \ge 0$ for all $c.$

The following theorem is an analogue of Theorem 3 in~\cite{myerson1983efficient}:
\begin{theorem}
\label{singleton-buyer-ic}
For any BNIC mechanism $\Mec$ in the public-buyer setting, $u_s(c)$ is decreasing and $$u_i + u_b + u_s(b) = \int_a^b \left(v - \left(c + \frac{G(c)}{G'(c)}\right)\right) x(c) \, G'(c) \, dc.$$
\end{theorem}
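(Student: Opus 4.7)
The plan is to mirror the proof of Theorem~\ref{singleton-seller-ic} line-by-line, swapping the roles of the buyer and seller, and being careful with signs since the seller's utility will turn out to be decreasing (rather than increasing as in the public-seller case).

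First I would establish monotonicity of $u_s(\cdot)$ directly from the BNIC constraint. For every $c$ and $c'$, the BNIC inequalities
\[
u_s(c) = p_s(c) - x(c)c \ge p_s(c') - x(c')c, \qquad u_s(c') = p_s(c') - x(c')c' \ge p_s(c) - x(c)c'
\]
can be subtracted to give $x(c')(c' - c) \le u_s(c) - u_s(c') \le x(c)(c' - c)$. Taking $c' > c$ and using $x \ge 0$ yields both that $x(\cdot)$ is nonincreasing (hence integrable) and that $u_s(\cdot)$ is nonincreasing. From the same chain of inequalities, $u_s'(c) = -x(c)$ almost everywhere, so
\[
u_s(c) = u_s(b) + \int_c^b x(t)\, dt.
\]

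Next I would expand the left-hand side of the target identity by adding and subtracting the payment terms inside the integrand:
\[
\int_a^b (v - c)\, x(c)\, G'(c)\, dc = u_b + u_i + \int_a^b u_s(c)\, G'(c)\, dc.
\]
Substituting the integral representation of $u_s(c)$ obtained above, swapping the order of integration in the resulting double integral, and collapsing the inner integral to $G(t)$ gives
\[
\int_a^b u_s(c)\, G'(c)\, dc = u_s(b) + \int_a^b G(t)\, x(t)\, dt.
\]
Combining the two displays and rearranging produces exactly
\[
u_i + u_b + u_s(b) = \int_a^b \left(v - \left(c + \frac{G(c)}{G'(c)}\right)\right) x(c)\, G'(c)\, dc,
\]
as desired.

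There is no real obstacle beyond bookkeeping: the asymmetry between this and Theorem~\ref{singleton-seller-ic} is only that the seller's utility is anchored at the upper endpoint $b$ rather than the lower endpoint $a$, and the Fubini swap is on $a \le c \le t \le b$ instead of $a \le t \le v \le b$. The only place where care is needed is the sign in $u_s'(c) = -x(c)$, which is what flips the anchor from $u_b(a)$ in the seller-public case to $u_s(b)$ here.
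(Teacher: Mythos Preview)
Your proposal is correct and follows essentially the same argument as the paper's own proof: derive monotonicity of $x(\cdot)$ and the envelope formula $u_s(c)=u_s(b)+\int_c^b x(t)\,dt$ from BNIC, split $\int_a^b (v-c)x(c)G'(c)\,dc$ into $u_b+u_i+\int u_s(c)G'(c)\,dc$, then apply Fubini and rearrange. Even the bookkeeping remarks about the sign of $u_s'$ and anchoring at $b$ match the paper's treatment exactly.
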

\begin{proof}
Since $\Mec$ is BNIC, for every $c$ and $c',$ $$u_s(c) = p_s(c) - x(c) \cdot c \ge p_s(c') - x(c') \cdot c$$ and $$u_s(c') = p_s(c') - x(c') \cdot c' \ge p_s(c) - x(c) \cdot c'.$$ Thus, $$(c' - c)x(c) \ge u_s(c) - u_s(c') \ge (c' - c)x(c').$$ In particular, if $c' > c,$ then $x(c) \ge x(c'),$ so $x(c)$ is decreasing and hence integrable. Therefore, $u'_s(c) = -x(c)$ almost everywhere, so 
\begin{equation}
u_s(c) = u_s(b) + \int_v^b x(t) \, dt,\label{eq:u_s_integral}
\end{equation}
and $u_s(c)$ is decreasing.\\\\
Now, by (\ref{eq:u_s_integral}), we have that
\begin{align*}
    \int_a^b (v - c)x(c) \, G'(c) \, dc - u_i &= \int_a^b (vx(c) - p_b(c)) \, G'(c) \, dc + \int_a^b (p_s(c) - cx(c)) \, G'(c) \, dc\\
    &= u_b + \int_a^b u_s(c) \, G'(c) \, dc\\
    &= u_b + \int_a^b \left(u_s(b) + \int_v^b x(t) \, dt\right) \, G'(c) \, dc \\
    &= u_b + u_s(b) + \int_a^b \int_v^b x(t) \, dt \, G'(c) \, dc \\
    &= u_b + u_s(b) + \int_a^b \int_a^t  G'(c) \, dc \, x(t) \, dt \\
    &= u_b + u_s(b) + \int_a^b G(t) \, x(t) \, dt.
\end{align*}
Finally, from the first and last expressions in this chain of equations, we obtain $$u_i + u_b + u_s(b) = \int_a^b \left(v - \left(c + \frac{G(c)}{G'(c)}\right)\right) x(c) \, G'(c) \, dc.$$
\end{proof}

Next, the following is an analogue of Theorem 4 in~\cite{myerson1983efficient}, tailored to the public-buyer setting:
\begin{theorem}
    Suppose $G$ is a regular distribution in the public-buyer setting. Then, among all BNIC and IR mechanisms, the broker's expected profit is maximized by a mechanism in which the object is transferred to the buyer if and only if $\phi_G(c) \le v$.
\end{theorem}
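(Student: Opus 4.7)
The plan is to dualize the argument used for Theorem~\ref{thm:publicseller-char}. Applying Theorem~\ref{singleton-buyer-ic}, I would rewrite the broker's expected profit as
\begin{equation*}
u_i = \int_a^b \left(v - c - \frac{G(c)}{G'(c)}\right) x(c) \, G'(c) \, dc - u_b - u_s(b).
\end{equation*}
Since every IR mechanism satisfies $u_b \ge 0$ and, by monotonicity of $u_s$ (Theorem~\ref{singleton-buyer-ic}) plus IR, $u_s(b) \ge 0$, the broker can do no better than the supremum of the integral taken over allocations $x(\cdot)$ that arise from some BNIC mechanism. The integrand is linear in $x(c)$, so pointwise maximization subject to $x(c)\in[0,1]$ yields $x^*(c)=\mathbf{1}\{v-c-G(c)/G'(c)\ge 0\}=\mathbf{1}\{\phi_G(c)\le v\}$. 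Crucially, because $G$ is regular, $\phi_G$ is nondecreasing, so $x^*$ is a decreasing step function on $[a,b]$; this is the monotonicity condition required for $x^*$ to be implementable by some BNIC mechanism.

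To certify the upper bound is attained, I would construct an explicit payment rule and verify BNIC/IR by hand. Let $c^* = \sup\{t\in[a,b]: \phi_G(t)\le v\}$ (taking $c^* = a$ and $x^*\equiv 0$ if the set is empty, in which case $u_i = 0$ trivially matches the bound). Define the payments by $p_b(c)=x^*(c)\cdot v$ and $p_s(c)=x^*(c)\cdot c^*$. A direct computation gives $u_b = 0$ and $u_s(b)=0$ (the latter because either $x^*(b)=0$ or $c^*=b$). For BNIC of the seller, note that reporting any $\tilde c$ with $\tilde c\le c^*$ yields payoff $c^*-c$, and any report with $\tilde c>c^*$ yields $0$; the truthful report is weakly optimal in both regimes $c\le c^*$ and $c>c^*$. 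IR for the seller then follows from $u_s$ being decreasing and $u_s(b)=0$, which together give $u_s(c)\ge 0$ for all $c\in[a,b]$. Combining these computations with the identity above, the broker's profit equals $\int_a^b(\phi_F^{-1}$-free integrand$)^+ G'(c)\,dc$, matching the upper bound.

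I do not anticipate a serious obstacle, since this is essentially the mirror image of the argument for Theorem~\ref{thm:publicseller-char} with the seller (rather than the buyer) as the strategic agent. The only points that require a bit of care are (i) handling the degenerate case in which $\phi_G(c) > v$ for every $c\in[a,b]$, so that the optimal allocation is identically zero and both sides of the optimality comparison equal zero, and (ii) checking that the claimed payment rule actually produces $u_s(b)=0$ at the boundary $c=b$, which is why defining the seller's payment through the cutoff $c^*$ (rather than through the marginal seller type) is convenient.
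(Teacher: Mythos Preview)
Your proposal is correct and follows essentially the same approach as the paper: both start from the identity of Theorem~\ref{singleton-buyer-ic}, use IR to reduce to maximizing the integral with $u_b=u_s(b)=0$, choose the threshold allocation $x^*(c)=\mathbf{1}\{\phi_G(c)\le v\}$, and verify BNIC/IR via the same payment rule $p_b(c)=x^*(c)v$ and $p_s(c)=x^*(c)\cdot c^*$. If anything, you are slightly more explicit than the paper in invoking regularity to ensure the monotonicity (hence implementability) of $x^*$ and in treating the degenerate case $\phi_G(c)>v$ for all $c$; the only cosmetic issue is the placeholder phrase ``$\phi_F^{-1}$-free integrand'' in your final line, which should simply read $\int_a^b (v-\phi_G(c))^+\,G'(c)\,dc$.
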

\begin{proof}
By Theorem \ref{singleton-buyer-ic}, we have that $$u_i = \int_a^b \left(v - \left(c + \frac{G(c)}{G'(c)}\right)\right) x(c) \, G'(c) - u_b - u_s(b).$$ Since we want an IR mechanism $\Mec$ that maximizes $u_i,$ we want $\Mec$ to satisfy the following constraints:
\begin{enumerate}
    \item $u_b = 0$
    \item $u_s(b) = 0$
    \item $x(c) = \begin{cases}
        1 & \text{if $\phi_G(c) \le v$}\\
        0 & \text{otherwise}
    \end{cases}$
\end{enumerate}
It remains to define $p(c) = (p_s(c), p_b(c))$ such that first two constraints are satisfied. We take $$p_s(c) = x(c) \cdot \sup \{t \le b | \phi_G(t) \le v\}$$ and $$p_b(c) = x(c)v.$$ Then, since $x(c)v - p_b(c) = 0$ for all $c$ and $p_s(b) = x(b)b,$ we have that $u_b = 0$ and $u_s(b) = 0,$ as desired.\\\\
Note that $\Mec$ is BNIC, since if trade occurs, the seller pays the highest valuation he could have reported and still have the trade occur. Then, since $u_s(c)$ is decreasing by Theorem \ref{singleton-buyer-ic} and $u_s(b) = 0,$ we have that $\Mec$ is IR.
\end{proof}

Theorem \ref{singleton-buyer-gft} directly follows from the above theorem.
\end{document}